\definecolor{red}{rgb}{1,0,0}
\definecolor{gre}{rgb}{0,0.7,0}
\newtheorem{thm}{Theorem}
\newtheorem{lem}{Lemma}
\newtheorem{prop}{Proposition}
\theoremstyle{definition}
\theoremstyle{remark}
\theoremstyle{plain}
\numberwithin{equation}{section}
\def\RR{{\mathbb R}}
\def\ZZ{{\mathbb Z}}
\def\vecH{{\text{\boldmath$H$}}}
\def\veci{{\text{\boldmath$i$}}}
\def\veck{{\text{\boldmath$k$}}}
\def\vecn{{\text{\boldmath$n$}}}
\def\vecq{{\text{\boldmath$q$}}}
\def\vecQ{{\text{\boldmath$Q$}}}
\def\vecp{{\text{\boldmath$p$}}}
\def\vecu{{\text{\boldmath$u$}}}
\def\vecx{{\text{\boldmath$x$}}}
\def\vecy{{\text{\boldmath$y$}}}
\def\vecY{{\text{\boldmath$Y$}}}
\def\vecz{{\text{\boldmath$z$}}}
\def\vecZ{{\text{\boldmath$Z$}}}
\def\vecbeta{{\text{\boldmath$\beta$}}}
\def\vecgamma{{\text{\boldmath$\gamma$}}}
\def\vecnu{{\text{\boldmath$\nu$}}}
\def\veceta{{\text{\boldmath$\eta$}}}
\def\vecxi{{\text{\boldmath$\xi$}}}
\def\scrA{{\mathcal A}}
\def\scrB{{\mathcal B}}
\def\scrQ{{\mathcal Q}}
\def\scrP{{\mathcal P}}
\def\scrS{{\mathcal S}}
\def\scrT{{\mathcal T}}
\def\Re{\operatorname{Re}}
\def\e{\mathrm{e}}
\def\i{\mathrm{i}}
\def\d{\mathrm{d}}
\def\nd{\mathrm{nd}}
\def\C{\operatorname{C{}}}
\def\Op{\operatorname{Op}}
\def\Tr{\operatorname{Tr}}
\def\vol{\operatorname{vol}}
\title[Derivation of the linear Boltzmann equation]{Derivation of the linear Boltzmann equation from the damped quantum Lorentz gas with a general scatterer configuration}
\author{Jory Griffin}
\address{Jory Griffin, School of Mathematics,
University of Bristol,
Bristol BS8 1TW, UK}
\email{\tt j.griffin@bristol.ac.uk}
\thanks{Research supported by EPSRC grant EP/S024948/1}
\begin{document}

\begin{abstract} 
It is a fundamental problem in mathematical physics to derive macroscopic transport equations from microscopic models. In this paper we derive the linear Boltzmann equation in the low-density limit of a damped quantum Lorentz gas for a large class of deterministic and random scatterer configurations. Previously this result was known only for the single-scatterer problem on the flat torus, and for uniformly random scatterer configurations where no damping is required. The damping is critical in establishing convergence  -- in the absence of damping the limiting behaviour depends on the exact configuration under consideration, and indeed, the linear Boltzmann equation is not expected to appear for periodic and other highly ordered configurations.
\end{abstract}

\maketitle


\section{Introduction}
The quantum Lorentz gas is a model of conductivity in which a single quantum particle (electron) evolves in the presence of a potential given by an infinite collection of compactly supported profiles placed on a discrete point set $\scrP \subset \RR^d$. These profiles, called scatterers from here on, represent the relatively heavy molecules of the background material. The point set one should choose, and the limiting behaviour one should expect, is thus dependent on the microscopic structure of the material in question.  A fundamental question is whether one can, for a given $\scrP$, derive a macroscopic transport equation, e.g. the linear Boltzmann equation, from this microscopic model.

Some reasonable choices for $\scrP$ are (i) a realisation of a (Poisson) point process to model disordered materials or an environment with random impurities, (ii) a lattice, union of lattices, or other periodic set to model metals and heavily ordered materials, (iii) aperiodic point sets to model quasicrystals. In the classical (non-quantum) setting, the pioneering papers \cite{Gallavotti69,Spohn78,Boldrighini83} established convergence of the Liouville equation to the linear Boltzmann equation in the low-density (Boltzmann-Grad) limit, provided the scatterer configuration $\scrP$ is random, e.g. given by a homogeneous Poisson point process. More recent work has shown that in the case of crystals \cite{Caglioti10,partII} or other point sets with long-range correlations (e.g. quasicrystals) \cite{MS2019}, different transport equations will emerge in the Boltzmann-Grad limit due to correlations that arise between consecutive collisions. These findings are somewhat mirrored in the quantum setting: On one hand, Eng and Erd\"os \cite{Eng_Erdos} proved convergence to the linear Boltzmann equation for random potentials in the low-density limit, following analogous results in the weak-coupling limit by Spohn \cite{Spohn77} and Erd\"os and Yau \cite{Erdos_Yau}; on the other hand, recent evidence suggests that a different transport law emerges in the same scaling limit when the potential is periodic \cite{GM_SecondOrder,GM_Heuristic}. 

The motivation for the work of the present paper is Castella's striking observation \cite{Castella_LD,Castella_LD2} that the space-homogeneous linear Boltzmann equation can be obtained as the limit of the von Neumann equation on the flat torus with a small scatterer if some damping is introduced. In particular, the evolution for `diagonal' terms is undamped (where incoming and outgoing momenta are equal), and the evolution for `nondiagonal' terms is exponentially damped in time (where incoming and outgoing momenta differ). This exponential damping of nondiagonal terms models phenomenologically the interaction of the system with, for example, a bath of photons or phonons, see \cite{Castella_LD} and references therein, in particular \cite[Chapter 7-3]{SSL}. (Also \cite{Korsch1, Korsch2}). In a rough sense, interactions with a `noisy' external environment can lead to `random' perturbations of the momenta. When the incoming and outgoing momenta are equal, these random perturbations tend to cancel one another out, but when the incoming and outgoing momenta are distinct, these random perturbations persist and lead to exponential decay. Here we will show, using such a damping mechanism, that the \emph{full} (position dependent) linear Boltzmann equation can be obtained as a limit of the quantum Lorentz gas in $\RR^d$ for a general class of scatterer configurations which includes both periodic and disordered examples.

The proof differs from that of the main Theorem in \cite{Castella_LD,Castella_LD2} in a number of ways. If the problem is restricted to the torus one has discrete momenta, and this allows Castella to (i) introduce a damping which is constant on all nondiagonal terms, but zero for diagonal terms, and then (ii) derive a transport equation for the diagonal part of the density matrix before taking any scaling limit to eliminate the nondiagonal terms - the convergence is then established on the level of this transport equation. If one instead considers the problem in $\RR^d$ the momenta are continuous and this approach no longer works. Instead, we (i) introduce a smooth damping function which is zero for diagonal terms and approaches some constant value smoothly as one moves away from the diagonal, and (ii) compute the limit of the full Duhamel expansion, separating damped and undamped regions using a combinatorial argument, and then show that the resulting expression satisfies the linear Boltzmann equation. The damping function in particular must be carefully chosen to scale in the correct way in the small scatterer limit in order to obtain this limiting behaviour, and one must be careful in dealing with the intermediate regime between the undamped and fully damped terms.

We assume in the following that $d\geq 3$. The time evolution of the quantum Lorentz gas is described by the Schr\"odinger equation
\begin{equation}
\frac{i h}{2 \pi} \partial_t \psi(t,\vecx) = H_{h,\lambda} \psi(t,\vecx), 
\end{equation}
where
\begin{equation} \label{Htrunc}
H_{h,\lambda} = -\frac{h^2}{8 \pi^2} \Delta +  \sum_{\vecq \in \scrP} \lambda(r^{d-1} \vecq ) \, W(r^{-1} (\vecx-\vecq)).
\end{equation}
The single-site potential $W$ is assumed to be in the Schwartz class $\scrS(\RR^d)$, $r>0$ is the effective radius of each scatterer, and the $\lambda$ is a cut-off function which we assume to be smooth with compact support contained within the unit ball. The classical mean free path length is $O(r^{1-d})$, so $\lambda$ has the effect of truncating the potential on the macroscopic scale. The assumption that $\lambda$ is compactly supported is a technical one to avoid infinite summation and it's possible that it can be weakened siginificantly. (For example, one may ideally wish to take $\lambda(\vecq)$ constant.)

We assume that $\scrP\subset\RR^d$ is a uniformly discrete point set with asymptotic density one. This technical requirement is introduced so that $\scrP$ provides a suitable set over which a $d$-dimensional Riemann sum can be computed, and that this Riemann sum converges with an explicit error term. In particular, we require that there exists $b_\scrP,c_\scrP>0$ such that $\|\vecq-\vecq'\| > b_\scrP$ for all $\vecq,\vecq'\in\scrP$ with $\vecq \neq \vecq'$ and for every $g\in\C_c^\infty(\RR^d)$, $0<\epsilon<1$ we have
\begin{equation} \label{assumption1}
\epsilon^d \sum_{\vecq \in \scrP} g\left( \epsilon\vecq\right) = \int_{\RR^d} g(\vecx) \d \vecx + O(\epsilon^{ c_\scrP} \| \nabla g \|) .
\end{equation}
Deterministic examples of $\scrP$ that satisfy these assumptions are lattices (e.g. $\scrP=\ZZ^d$) and large classes of quasicrystals (e.g. the vertices of a Penrose tiling). For random examples one can take the so-called Mat\'ern processes \cite{Matern} in which a realisation of a homogeneous Poisson point process is then \emph{thinned} to remove clusters, or a random displacement model, in which each point in a deterministic set (e.g. a lattice) is randomly perturbed by a small amount. (As long as the random perturbation is small enough the resulting point set will be uniformly discrete provided the initial point set is uniformly discrete). The restriction to uniform discreteness likely can be weakened. For example, one may wish to take $b_\scrP$ to depend on $\|\vecq\|$ and $\|\vecq'\|$, or insist that that $\|\vecq-\vecq'\|>b_\scrP$ holds only for almost all pairs of points in $\scrP$. In both cases we expect the same results to hold.

To study the quantum transport and the Boltzmann-Grad limit, it is convenient to move to the equivalent Heisenberg picture and study the quantum Liouville equation (or von Neumann equation/ backward Heisenberg equation)
\begin{equation} \label{heisenberg}
\partial_t \rho_t = - \frac{2\pi \i}{h} [H_{h,\lambda}, \rho_t]
\end{equation}
for a density operator $\rho_t$. We introduce damping to the system by considering the $\alpha$-damped von Neumann equation (in momentum representation):
\begin{equation} \label{dampedheisenberg}
\partial_t \, \widehat\rho_t(\vecy,\vecy') = - \frac{2\pi \i}{h} [\widehat H_{h,\lambda}, \widehat\rho_t](\vecy,\vecy') - \frac{\alpha^d}{h} \, \big(1-\Gamma(\alpha h^{1-d}(\vecy-\vecy'))\big) \, \widehat\rho_t(\vecy,\vecy'),
\end{equation}
where $\alpha \geq 0$ is the strength of the damping and $\Gamma\in\C_c^\infty(\RR^d)$ with values in $[0,1]$ so that $\Gamma(\vecy)=1$ in some neighbourhood of the origin and $\Gamma(\vecy) = 0$ for $\|\vecy\|>1$. Eq.~\eqref{dampedheisenberg} describes the averaged quantum dynamics of a particle subject to white noise in momentum where $\Gamma(\vecy)$ is the covariance function of the corresponding Gaussian random field. We refer the reader to \cite{Fischer} for detailed rigorous treatment of white noise perturbations in phase space, and to \cite{Hislop,Jayannavar_Kumar, Madhukar_Post} for the more standard setting in position space.

In order to establish the convergence of the damped von Neumann equation \eqref{dampedheisenberg} to the linear Boltzmann equation, we need to carefully prepare the initial condition of $\rho_t$ relative to a classical phase space density $a$. Following the approach in \cite{GM_SecondOrder}, we achieve this by the rescaled Weyl quantisation $\Op_{r,h}(a)$ of a classical phase-space symbol $a$:
\begin{equation}\label{def:Opa}
\Op_{r,h}(a) f(\vecx) = r^{d(d-1)/2} h^{d/2} \int_{\RR^{2d}} a(\tfrac12 r^{d-1}  (\vecx+\vecx'),h \vecy) \, \e((\vecx-\vecx')\cdot\vecy)\, f(\vecx')\, \d \vecx' \d \vecy ,
\end{equation}
with the shorthand $\e(x) := \e^{2 \pi \i x}$. This means we measure momenta on the semi-classical scale, and position on the scale of the classical mean free path. Although other scalings are possible, we will here focus on the case when $r=h$. This will ensure that scattering remains truly quantum in the limit $r\to 0$, and that we see the full quantum $T$-operator in the limit. For the single scatterer Hamiltonian 
\begin{equation} \label{Htrunc}
H_{\mu} = -\frac{1}{8 \pi^2} \Delta + \mu \, W(\vecx),
\end{equation}
we define the $T$-operator at energy $E$ to be the operator satisfying
\begin{equation}\label{def:TE}
T_\mu(E)=\mu \,  \Op_{1,1}(W)  \left( 1 + \frac{1}{E- H_0 +\i 0_+} \, T_\mu(E) \right)
\end{equation}
and write $T_\mu(\vecy,\vecy')$ for its integral kernel in momentum representation at energy $E=\tfrac12 \|\vecy\|^2$. We have the explicit expansion (understood in terms of distributions)
\begin{equation}
\begin{split} \label{Texplicit}
T_\mu(\vecy,\vecy') &= \mu \widehat W(\vecy-\vecy') \\
&+ \sum_{\ell=1}^{\infty}(-2\pi \i)^{\ell} \mu^{\ell+1} \int_{\RR^{d \ell} } \widehat W(\vecy -\vecy_1) \cdots \widehat W(\vecy_\ell-\vecy') \\
&\hspace{3cm} \times  [\prod_{i=1}^{\ell}\int_0^{\infty} \e(\tfrac12 ( \|\vecy\|^2- \|\vecy_i\|^2) u ) \, \d u ]  \, \d \vecy_1 \cdots \d \vecy_\ell
\end{split}
\end{equation}
where
\begin{equation}
\widehat W(\vecy) := \int_{\RR^d} W(\vecx) \, \e(-\vecx\cdot\vecy) \, \d \vecx.
\end{equation}

\begin{thm} \label{theorem}
Let $a,b$ be in the Schwartz class $\scrS(\RR^d \times \RR^d)$. If $\rho_t$ is a solution of the $\alpha$-damped von Neumann equation \eqref{dampedheisenberg} subject to the initial condition $\rho_0 = \Op_{r,h}(a)$, then for $t>0$
\begin{equation}
\lim_{\alpha\to 0}\lim_{r=h\to 0} \Tr(\rho_{r^{1-d} t} \Op_{r,h}(b)) = \int_{\RR^{2d}} f(t,\vecx,\vecy) b(\vecx,\vecy) \d \vecx \d \vecy
\end{equation}
where $f(t,\vecx,\vecy)$ solves the linear Boltzmann equation 
\begin{equation}\label{LB2}
\begin{cases} 
\displaystyle
\bigl(\partial_t+\vecy\cdot\nabla_\vecx\bigr) f(t,\vecx,\vecy) 
=
\int_{\RR^d}  \big[ \Sigma_{\lambda(\vecx)}(\vecy,\vecy') f(t,\vecx,\vecy') -  \Sigma_{\lambda(\vecx)}(\vecy',\vecy)  f(t,\vecx,\vecy) \big] \,\d\vecy' & \\
 f(0,\vecx,\vecy)=a(\vecx,\vecy) & 
 \end{cases}
\end{equation}
with the collision kernel
\begin{equation} \label{LBEkernel}
\Sigma_{\mu}(\vecy,\vecy') = 8 \pi^2 | T_{\mu}(\vecy,\vecy')|^2 \, \delta(\|\vecy\|^2-\|\vecy'\|^2).
\end{equation}
\end{thm}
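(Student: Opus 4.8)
The plan is to Duhamel-expand the solution of \eqref{dampedheisenberg} in the scatterer potential, identify collision by collision which terms survive the Boltzmann--Grad limit $r=h\to 0$, and then let $\alpha\to 0$. Write $\widehat H_{h,\lambda}=\widehat H_0+\widehat V$, where $\widehat V$ is the momentum-representation (convolution) kernel of the potential. Since $\widehat H_0$ and the damping multiplier $\tfrac{\alpha^d}{h}(1-\Gamma(\alpha h^{1-d}(\vecy-\vecy')))$ are both diagonal in momentum, absorb them into a free damped propagator and iterate Duhamel to obtain $\widehat\rho_s=\sum_{n\ge 0}\widehat\rho_s^{(n)}$, where $\widehat\rho_s^{(n)}$ is an $n$-fold time-ordered integral carrying $n$ insertions of $\widehat V$ distributed in all $2^n$ ways between left and right multiplication. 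Tracing against $\Op_{r,h}(b)$ and using \eqref{def:Opa}, the quantity $\Tr(\widehat\rho_{r^{1-d}t}^{(n)}\Op_{r,h}(b))$ becomes an integral over a collision history: ket momenta $\vecy_0\to\vecy_1\to\cdots$ and bra momenta $\vecy_0'\to\vecy_1'\to\cdots$, a Fourier factor $\widehat W(r\,\cdot)$ at each vertex, the damped free phases on each time segment, and, for each scatterer used, a sum over $\vecq\in\scrP$ of $\lambda(r^{d-1}\vecq)$ times a phase linear in $\vecq$ whose frequency is the net momentum transferred across that scatterer.

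Next I would group consecutive vertices sharing the same $\vecq\in\scrP$ into \emph{scatterer clusters}: a maximal ket-cluster resums, via exactly the series \eqref{Texplicit}, into one matrix element of a $T$-operator in which the $\i 0_+$ of \eqref{Texplicit} has been replaced by $\i\,c\alpha^d$ for some $c>0$, an effect of the damping acting between the cluster's ket vertices and their bra-side partners, and a bra-cluster resums into the complex conjugate. Distinct clusters use distinct points of $\scrP$; feeding each remaining sum over $\scrP$ into assumption \eqref{assumption1} both produces the macroscopic spatial integral carrying the position dependence $\lambda(\vecx)$ of the collision kernel and renders distinct clusters asymptotically independent, with an error controlled by $c_\scrP$ and $b_\scrP$. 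The combinatorial core is then that a single vertex shifts a momentum by $O(h^{-1})$ whereas the undamped window $\{\Gamma=1\}$ in the variable $\vecy-\vecy'$ has width only $O(\alpha^{-1}h^{d-1})$, so any stretch of history on which ket and bra momenta disagree on the macroscopic scale lies in $\{\Gamma=0\}$ and is suppressed by $\exp(-c\,\alpha^d h^{-d}\cdot(\text{off-diagonal time}))$. Hence only the ``ladder'' histories survive --- clusters occurring in coincident ket/bra pairs with equal momentum transfers --- while uncompensated, crossed, and time-displaced clusters are killed in the limit.

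A ladder with $n$ pairs converges to the $n$-th term, along free-transport characteristics, of the Dyson series of the collision operator with a kernel of the form $\Sigma^{(\alpha)}_{\lambda(\vecx)}(\vecy,\vecy')=8\pi^2|T^{(\alpha)}_{\lambda(\vecx)}(\vecy,\vecy')|^2\,\delta_\alpha(\|\vecy\|^2-\|\vecy'\|^2)$ applied to the initial datum $a$, the damping also smoothing the energy-shell delta into $\delta_\alpha$. Given a factorial-type bound on the ladder terms uniform in $n$ (and a bound showing the non-ladder contributions vanish), one sums over $n$ and concludes $\lim_{r=h\to 0}\Tr(\rho_{r^{1-d}t}\Op_{r,h}(b))=\int_{\RR^{2d}} f^{(\alpha)}(t,\vecx,\vecy)\,b(\vecx,\vecy)\,\d\vecx\,\d\vecy$, where $f^{(\alpha)}$ solves the linear Boltzmann equation with kernel $\Sigma^{(\alpha)}$ --- equivalently, one verifies directly that the resummed limit solves that PDE. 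The $n=0$ term together with the semiclassical fact $\Tr(\Op_{r,h}(a)\Op_{r,h}(b))\to\int a\,b$ supplies the free-transport part and the initial condition. Finally, as $\alpha\to 0$ one has $T^{(\alpha)}_\mu\to T_\mu$ and $\delta_\alpha\to\delta$ on the energy shell, so $\Sigma^{(\alpha)}\to\Sigma$ and $f^{(\alpha)}\to f$, the unique solution of \eqref{LB2}; for $d\ge 3$ and $W\in\scrS(\RR^d)$ the total cross section $\int\Sigma_\mu(\vecy',\vecy)\,\d\vecy'$ is finite and locally bounded, so \eqref{LB2} is well-posed with a convergent Dyson series, which is exactly what the ladder sum reproduces.

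The main obstacle is precisely the ``intermediate regime'' flagged in the introduction: making the dichotomy ``macroscopically diagonal $\Rightarrow$ undamped / otherwise $\Rightarrow$ exponentially suppressed'' quantitative and uniform in the number of collisions, and in particular controlling ladder histories whose paired ket and bra clusters are separated by a short but nonzero time --- the interpolation between fully undamped and fully damped behaviour. This forces a careful tuning of $\Gamma$ so that its undamped window shrinks more slowly than every vertex-induced momentum shift while the damping time $h/\alpha^d$ remains long on the internal $T$-operator time scale; it is also where the uniform discreteness of $\scrP$ and the explicit error in \eqref{assumption1} are needed, because for periodic or quasicrystalline $\scrP$ the geometric recollision contributions do not vanish on their own and are eliminated only by the damping. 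The remaining ingredients --- the cluster resummation into $T^{(\alpha)}_\mu$, the semiclassical limit of the trace pairing, the properties of $T_\mu$ and $T^{(\alpha)}_\mu$ for $d\ge 3$, and the well-posedness of \eqref{LB2} --- are comparatively routine.
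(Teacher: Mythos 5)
Your overall strategy coincides with the paper's: Duhamel expansion, use of the damping to confine the evolution to the macroscopic momentum diagonal except during microscopically short collision stretches, resummation of each surviving collision into the Born series \eqref{Texplicit}, and identification of the resulting series with \eqref{LB2}. (The paper orders the last steps differently -- it sends $\alpha\to 0$ inside the time integrals first and only then identifies the Born series via Castella's Lemma~3 and Theorem~2, rather than carrying a regularised $T^{(\alpha)}$ through; your intermediate claim that the $r\to 0$ limit alone already solves a Boltzmann equation with a mollified kernel is doubtful, since at fixed $\alpha$ the limit retains non-Markovian constraints of the form $\bm 1[\Gamma(\alpha\sum_k\veceta_k)=1]$ coupling the collisions.)

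The genuine gap is the step ``hence only the ladder histories survive.'' The damping cannot do this job. Compare a history in which two consecutive vertices inside one off-diagonal stretch are attached to the \emph{same} scatterer with one in which they are attached to \emph{different} scatterers $\vecq\neq\vecq'\in\scrP$: both carry identical damping factors, because the damping sees only the momentum difference $\vecy-\vecy'$ and not the scatterer labels, so the cross term is not exponentially suppressed relative to the surviving diagonal term. Nor is \eqref{assumption1} the right tool, since it is a Riemann-sum statement for fixed smooth test functions, whereas here each extra free sum over $\scrP$ (of size $\sim r^{-d(d-1)}$ on the support of $\lambda$) is weighted by the rapidly oscillating phase $\e(-r^{-1}(\vecq-\vecq')\cdot\vecz)$. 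What actually kills these terms is non-stationary phase in the scatterer positions: one shows the integrand and one derivative in each component of each internal momentum lie in $L^1$, so the Fourier transform decays like $\prod_j\min\{1,|r^{-1}(q_j-q'_j)|^{-1}\}$, and a dyadic count using $\|\vecq-\vecq'\|>b_\scrP$ converts each such sum into a factor $O(r\log(1+r^{1-d}b_\scrP^{-1}))$. Two points make this the technical core and are missing from your plan: (i) the bound must be of the form $C^m$ uniformly in the expansion order $m$, and a naive product rule applied to the nested cutoffs $(I-\chi)(\alpha(\cdots+r^{-d}\sum_{\ell\le j}\vecz_{i\ell}))$ generates $(\kappa_i!)^d$ terms, destroying summability -- the paper first performs a change of variables adapted to the ket/bra ordering within each block precisely to avoid this; (ii) the constants grow like $\alpha^{-d(m-1+p)}$, so the argument closes only because $r\to0$ is taken at fixed $\alpha$ with $\|\lambda\|_{1,\infty}$ small. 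Finally, the passage $\alpha\to0$ in the time integrals requires the stationary-phase bound $\langle u\rangle^{-d/2}$, which is where $d\ge 3$ enters and is not as routine as your closing paragraph suggests.
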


Note that the limits $\alpha \to 0$ and $r \to 0$ do not commute. Indeed if one first takes the limit $\alpha \to 0$ followed by $r \to 0$ one is back in the situation of \cite{Eng_Erdos,GM_SecondOrder,GM_Heuristic} where the limit depends on the precise nature of $\scrP$. The striking feature of Theorem \ref{theorem} is that the limit is the same for all admissible scatterer configurations $\scrP$, from periodic to highly disordered.

In Section 2 we perform the Duhamel expansion of the solution to the damped Heisenberg equation, this allows us to obtain an explicit formal expansion for the solution as a power series in $\lambda(\vecx)$. In Section 3 we perform a carefully chosen partition of unity which allows us to isolate the damped and undamped regions. In Section 4 we perform the low-density followed by the zero damping limit on this reorganised series. This Section constitutes the bulk of the paper: we first show that the sum of all nondiagonal terms converges, and then vanishes in the limit; then we show that the sum of all diagonal terms converges, and hence that the entire series converges to some $f(t,\vecx,\vecy)$ given explicitly as an expansion in $\lambda$. In Section 5 we prove that our limiting expression coincides with a solution of the linear Boltzmann equation using \cite{Castella_LD2}.

\section{Deriving a Formal Expansion}

In the momentum representation, the kernel of the Hamiltonian \eqref{Htrunc} reads 
\begin{equation}
\widehat{H}_{h,\lambda}(\vecy,\vecy') = \frac{h^2}{2} \|\vecy\|^2 \, \delta(\vecy-\vecy') + \widehat{\Op}(V)(\vecy,\vecy')
\end{equation}
where
\begin{equation}
\widehat{\Op}(V)(\vecy,\vecy') =r^d \sum_{\vecq \in \scrP}  \lambda(r^{d-1} \vecq )  \e(\vecq\cdot(\vecy'-\vecy)) \widehat W(r(\vecy-\vecy')).
\end{equation}
Inserting these into \eqref{dampedheisenberg} yields, after a suitable variable substitution,
\begin{equation}\begin{split}\label{EE0}
\partial_t \widehat\rho_t(\vecy,\vecy') = & -\left( \pi \i \, h (  \|\vecy\|^2 - \|\vecy'\|^2) +  \frac{\alpha^d}{h} (1-\Gamma(\alpha h^{1-d}(\vecy-\vecy'))) \right) \widehat\rho_t(\vecy,\vecy') \\
& - \frac{2 \pi \i}{h}  r^d \sum_{\vecq \in \scrP} \lambda (r^{d-1} \vecq )  \int_{\RR^d} \d \vecz \, \e( -\vecq\cdot \vecz) \widehat W(r\vecz) \, [\widehat\rho_t(\vecy-\vecz,\vecy') -\widehat\rho_t(\vecy,\vecy'+\vecz) ].
\end{split}\end{equation}
Following Castella \cite{Castella_LD}, it will be convenient to write 
\begin{equation}
\widehat\rho_t(\vecy-\vecz,\vecy') -\widehat\rho_t(\vecy,\vecy'+\vecz)=  - \sum_{\gamma\in\{0,1\}} (-1)^\gamma \widehat\rho_t(\vecy- \gamma \vecz,\vecy'+\bar\gamma \vecz) 
\end{equation}
with $\bar\gamma:=1-\gamma$.
The Duhamel principle for \eqref{EE0} yields 
\begin{equation}\begin{split}
\widehat\rho_t(\vecy,\vecy') &=  \e( - \tfrac{h}{2} (  \|\vecy\|^2 - \|\vecy'\|^2) \, t ) \, \e^{ -\frac{\alpha^d}{h} (1-\Gamma(\alpha h^{1-d}(\vecy'-\vecy))) \, t } \widehat\rho_0(\vecy,\vecy') \\
& + \frac{2 \pi \i}{h}  r^d \sum_{\vecq \in \scrP} \lambda(r^{d-1} \vecq )  \int_{\RR^d} \d \vecz \, \e( -\vecq\cdot \vecz) \widehat W(r\vecz)  \sum_{\gamma\in\{0,1\}} (-1)^\gamma  \\
& \times \int_0^t \e( - \tfrac{h}{2} (  \|\vecy\|^2 - \|\vecy'\|^2) \, (t-s) ) \, \e^{ - \frac{\alpha^d}{h} (1-\Gamma(\alpha h^{1-d}(\vecy'-\vecy))) \, (t-s) }  \widehat\rho_s(\vecy- \gamma \vecz,\vecy'+\bar\gamma \vecz) \, \d s.
\end{split}\end{equation}
Iterating this expression and making the substitutions $u_0 = t-s_1$ and $u_j = s_{j}-s_{j+1}$ for $j\geq 1 $ we obtain the formal expansion
\begin{equation}\begin{split} \label{expansionA0}
\widehat\rho_t(\vecy,\vecy') &= \e( - \tfrac{h}{2} (  \|\vecy\|^2 - \|\vecy'\|^2) \, t ) \, \e^{ -\frac{\alpha^d}{h} (1- \Gamma(\alpha h^{1-d}(\vecy'-\vecy))) \, t } \widehat\rho_0(\vecy,\vecy') \\
& + \sum_{m=1}^\infty (2 \pi \i h^{-1} r^d)^m \sum_{\vecq_1,\cdots,\vecq_m \in \scrP} \lambda(r^{d-1} \vecq_1 ) \cdots \lambda(r^{d-1} \vecq_m) \\
& \times \int_{\RR^{md}} \d \vecz_1\cdots\d\vecz_m \e( -\vecq_1\cdot \vecz_1-\cdots - \vecq_m\cdot\vecz_m) \,  \widehat W(r\vecz_1) \cdots \widehat W(r\vecz_m) \\
 & \times\sum_{\gamma_1, \cdots, \gamma_m \in\{0,1\}} (-1)^{\gamma_1+\cdots+\gamma_m}  
\int_{
\triangle_m(t)} \d u_0\cdots \d u_{m}  \\
& \times \left[ \prod_{j=0}^{m}\e( - \tfrac{h}{2} (  \|\vecy -  \sum_{i=1}^j \gamma_i \vecz_i \|^2  - \|\vecy' + \sum_{i=1}^j \bar\gamma_i \vecz_i \|^2) \, u_j ) \right] \\
& \times \left[ \prod_{j=0}^{m} \e^{ - \frac{\alpha^d}{h} (1-\Gamma(\alpha h^{1-d}(\vecy' - \vecy + \sum_{i=1}^j \vecz_i)))  \, u_j } \right] \widehat\rho_0(\vecy -  \sum_{i=1}^m \gamma_i \vecz_i, \vecy' + \sum_{i=1}^m \bar\gamma_i \vecz_i)
\end{split}\end{equation}
where $\triangle_m(t) \subset \RR^{m+1}$ is the set 
$$ \triangle_m(t) = \{  (u_0,\dots,u_m) \in \RR_+^{m+1} \mid u_0+\cdots+u_{m} = t \}.$$
We now wish to compute $\Tr(\rho_{r^{1-d} t} \Op_{r,r}(b))=\Tr(\widehat\rho_{r^{1-d} t} \widehat\Op_{r,r}(b))$, where $\rho_t$ solves the damped von Neumann equation \eqref{dampedheisenberg} with initial condition $\rho_0 = \Op_{r,r}(a)$. 
The kernel of $\Op_{r,h}(a)$ as defined in \eqref{def:Opa} reads in momentum representation
\begin{equation}
\widehat \Op_{r,h}(a) (\vecy,\vecy') = r^{-d(d-1)/2}h^{d/2}\tilde a(r^{1-d}(\vecy-\vecy'), \tfrac{h}{2}(\vecy+\vecy') )
\end{equation}
where $\tilde a(\vecxi,\vecy) = \int_{\RR^d}a(\vecx,\vecy) \e(-\vecx\cdot\vecxi) \d \vecx$. Inserting these in \eqref{expansionA0} yields the expansion
\begin{equation}
\Tr (\rho_{r^{1-d} t} \Op_{r,r}(b) ) = \sum_{m=0}^\infty (2 \pi \i)^m \, \scrA_m^{\alpha,r}(t)
\end{equation}
where
\begin{multline}
\scrA_0^{\alpha,r}(t) = r^{-d(d-2)} \int_{\RR^{2d}} \d \vecy \d \veceta \, \e( - \tfrac{1}{2} r^{2-d}  (  \|\vecy\|^2 - \|\veceta\|^2) \, t )\e^{ - \alpha^d r^{-d} (1-\Gamma(\alpha r^{1-d}(\veceta-\vecy)))t }  \\ \times \tilde a(r^{1-d}(\vecy-\veceta), \tfrac{r}{2}  (\vecy+\veceta) ) \, \tilde b(r^{1-d}(\veceta-\vecy), \tfrac{r}{2}(\veceta+\vecy)) ,
\end{multline}
and for $m\geq 1$
\begin{equation}
\begin{split}
\scrA_m^{\alpha,r}(t) &= r^{(m-d)(d-1)+d}\sum_{\vecq_1,\cdots,\vecq_m \in \scrP} \lambda(r^{d-1} \vecq_1 ) \cdots \lambda(r^{d-1} \vecq_m) \\
&\times \sum_{\gamma_1, \cdots, \gamma_m \in\{0,1\}} (-1)^{\gamma_1+\cdots+\gamma_m} \int_{\RR^{2d}} \d \vecy \d \veceta \int_{\RR^{md}} \d \vecz_1\cdots\d\vecz_m \\
&\times \left[ \prod_{i=1}^m \e(-\vecq_i \cdot \vecz_i) \, \widehat W(r\vecz_i) \right]  \int_{\triangle_m(r^{1-d} t)} \d u_0\cdots \d u_{m}\, \left[ \prod_{j=0}^{m} \e^{ -\frac{\alpha^d}{r} (1-\Gamma(\alpha r^{1-d}(\veceta - \vecy + \sum_{i=1}^j \vecz_i)) )\, u_j } \right] \\
&\times\left[ \prod_{j=0}^{m}\e( \tfrac{r}{2} (  \|\veceta +  \sum_{i=1}^j \bar\gamma_i \vecz_i \|^2  - \|\vecy - \sum_{i=1}^j \gamma_i \vecz_i \|^2) \, u_j ) \right]\\
&\times \tilde a(r^{1-d}(\vecy-\veceta-\sum_{i=1}^m \vecz_i), \tfrac{r}{2}(\vecy+\veceta - \sum_{i=1}^m (\gamma_i-\bar\gamma_i)\vecz_i) ) \, \tilde b(r^{1-d}(\veceta-\vecy), \tfrac{r}{2}(\veceta+\vecy) ).
\end{split}
\end{equation}
We first make the substitution $\veceta \to \vecy + r^{d-1} \veceta$. Then, make the substitution $\vecy \to r^{-1}\vecy$ and for all $j$, make the substitutions $u_j \to r u_j$, $\vecz_j \to r^{-1} \vecz_j$. This yields the expression
\begin{equation} \label{simple}
\begin{split}
\scrA_m^{\alpha,r}(t) &=  \sum_{\vecq_1,\cdots,\vecq_m \in \scrP} \lambda(r^{d-1} \vecq_1 ) \cdots \lambda(r^{d-1} \vecq_m)\, \sum_{\gamma_1, \cdots, \gamma_m \in\{0,1\}} (-1)^{\gamma_1+\cdots+\gamma_m}   \\ 
&\times \int_{\RR^{2d}}\d\vecy\d\veceta \int_{\RR^{md}} \d \vecz_1\cdots\d\vecz_m \left[ \prod_{i=1}^m \e(- r^{-1} \vecq_i \cdot \vecz_i) \, \widehat W(\vecz_i) \right] \\
&\times \int_{\triangle_m(r^{-d}t)} \d u_0\cdots \d u_{m} \,  \left[ \prod_{j=0}^{m}\e( \xi_j \, u_j ) \, \e^{- \alpha^d (1-\Gamma(\alpha (\veceta+r^{-d}\sum_{i=1}^j \vecz_i) )) \, u_j} \right]\\
&\times \tilde a(- \veceta - r^{-d} \sum_{i=1}^m \vecz_i,  \vecy - \sum_{i=1}^m \gamma_i \vecz_i + \tfrac{1}{2} r^{d}\veceta + \tfrac12 \sum_{i=1}^m \vecz_i) ) \, \tilde b(\veceta, \vecy + \tfrac12 r^{d} \veceta)
\end{split}
\end{equation}
where $\xi_j$ is given by
\begin{equation}\begin{split}
\xi_j &= \tfrac12(\|\vecy +  \sum_{i=1}^j \bar\gamma_i \vecz_i +r^{d}\veceta\|^2)- \|\vecy -  \sum_{i=1}^j \gamma_i \vecz_i \|^2 ) \\
&= (\vecy- \sum_{i=1}^j \gamma_i \vecz_i) \cdot (\sum_{i=1}^j \vecz_i + r^d \veceta ) +  \tfrac12 \| \sum_{i=1}^j \vecz_i + r^d \veceta \|^2.
\end{split}\end{equation}
The limit of the first term can be computed immediately.
\begin{prop} \label{prop:mequalszero}
\begin{equation}
\lim_{\alpha \to 0} \lim_{r\to 0} \scrA_0^{\alpha,r}(t) = \int_{\RR^{2d}} \d \vecx \d \vecy \, a(\vecx-t\vecy, \vecy) \, b(\vecx,\vecy).
\end{equation}
\end{prop}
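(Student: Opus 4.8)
The plan is to reduce $\scrA_0^{\alpha,r}(t)$ to a clean oscillatory integral by the rescaling already used for the $m\geq 1$ terms, pass to the limit in two stages by dominated convergence, and identify the answer via Fourier inversion.

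First I would apply to the displayed formula for $\scrA_0^{\alpha,r}(t)$ the substitutions $\veceta\to\vecy+r^{d-1}\veceta$ and then $\vecy\to r^{-1}\vecy$, exactly as in the derivation of \eqref{simple}. The two Jacobian factors $r^{d(d-1)}$ and $r^{-d}$ together cancel the prefactor $r^{-d(d-2)}$, and a short computation turns the arguments of $\tilde a,\tilde b$ into $(-\veceta,\vecy+\tfrac12 r^d\veceta)$ and $(\veceta,\vecy+\tfrac12 r^d\veceta)$, the energy phase into $\e\bigl((\vecy\cdot\veceta+\tfrac12 r^d\|\veceta\|^2)t\bigr)$, and the damping factor into $\e^{-\alpha^d r^{-d}(1-\Gamma(\alpha\veceta))t}$, so that
\[
\scrA_0^{\alpha,r}(t)=\int_{\RR^{2d}}\d\vecy\,\d\veceta\;\e\bigl((\vecy\cdot\veceta+\tfrac12 r^d\|\veceta\|^2)t\bigr)\,\e^{-\alpha^d r^{-d}(1-\Gamma(\alpha\veceta))t}\,\tilde a(-\veceta,\vecy+\tfrac12 r^d\veceta)\,\tilde b(\veceta,\vecy+\tfrac12 r^d\veceta).
\]

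Next, with $\alpha>0$ fixed, I would let $r\to 0$. The integrand is bounded, uniformly in $r\in(0,1]$, by $|\tilde a(-\veceta,\vecy+\tfrac12 r^d\veceta)\,\tilde b(\veceta,\vecy+\tfrac12 r^d\veceta)|$, whose $(\vecy,\veceta)$-integral equals $\int_{\RR^{2d}}|\tilde a(-\veceta,\vecy)\,\tilde b(\veceta,\vecy)|\,\d\vecy\,\d\veceta<\infty$ after $\vecy\mapsto\vecy-\tfrac12 r^d\veceta$, giving an $r$-independent Schwartz majorant. Pointwise the terms $r^d\|\veceta\|^2 t$ and $r^d\veceta$ vanish, while the damping factor tends to $1$ on $\{\Gamma(\alpha\veceta)=1\}$ and to $0$ on $\{\Gamma(\alpha\veceta)<1\}$ since $\alpha^d r^{-d}\to\infty$; dominated convergence then gives $\lim_{r\to 0}\scrA_0^{\alpha,r}(t)=\int_{\RR^{2d}}\e(\vecy\cdot\veceta\,t)\,\mathbf 1_{\{\Gamma(\alpha\veceta)=1\}}\,\tilde a(-\veceta,\vecy)\,\tilde b(\veceta,\vecy)\,\d\vecy\,\d\veceta$. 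Letting now $\alpha\to 0$, and using that $\Gamma\equiv 1$ on a ball about the origin so that $\{\Gamma(\alpha\veceta)=1\}$ exhausts $\RR^d$, the indicator increases to $1$ and the same majorant yields $\lim_{\alpha\to 0}\lim_{r\to 0}\scrA_0^{\alpha,r}(t)=\int_{\RR^{2d}}\e(\vecy\cdot\veceta\,t)\,\tilde a(-\veceta,\vecy)\,\tilde b(\veceta,\vecy)\,\d\vecy\,\d\veceta$. Finally, writing $\tilde a(-\veceta,\vecy)=\int a(\vecx,\vecy)\,\e(\vecx\cdot\veceta)\,\d\vecx$ and $\tilde b(\veceta,\vecy)=\int b(\vecx',\vecy)\,\e(-\vecx'\cdot\veceta)\,\d\vecx'$ and carrying out the $\veceta$-integral produces $\delta(\vecx'-\vecx-t\vecy)$; integrating out $\vecx'$ and shifting $\vecx\mapsto\vecx-t\vecy$ gives $\int_{\RR^{2d}}a(\vecx-t\vecy,\vecy)\,b(\vecx,\vecy)\,\d\vecx\,\d\vecy$, which is the assertion.

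I do not expect a serious obstacle; the only point needing care is the interchange of limit and integral for the damping factor, whose exponent $\alpha^d r^{-d}(1-\Gamma(\alpha\veceta))t$ blows up as $r\to 0$. This is harmless because the factor stays in $[0,1]$ and, after the Step-one rescaling, the $\vecy$-dependence of $\tilde a,\tilde b$ has decoupled enough that a single $r$- and $\alpha$-independent Schwartz dominating function is available. The closing Fourier-inversion step is a standard Parseval computation for Schwartz functions, and could be done without distributions by integrating in $\veceta$ before $\vecx$ and $\vecx'$.
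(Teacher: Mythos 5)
Your proposal is correct and follows essentially the same route as the paper: the same rescaling of $\veceta$ and $\vecy$, two applications of dominated convergence (first $r\to 0$ with the damping factor converging to the indicator $\bm 1[\Gamma(\alpha\veceta)=1]$, then $\alpha\to 0$), and Fourier inversion to identify the limit. Your version of the rescaled phase, with the coefficient $\tfrac12 r^d\|\veceta\|^2$, is in fact the correct one (the paper's display drops the $\tfrac12$, harmlessly since the term vanishes in the limit).
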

\begin{proof}
We have that
\begin{equation}
\scrA_0^{\alpha,r}(t) =  \int_{\RR^{2d}} \d \vecy \d \veceta \, \e( \vecy\cdot\veceta t + r^{d}\|\veceta\|^2 \, t ) \\ 
 \e^{ - r^{-d}\alpha^d (1-\Gamma(\alpha \veceta)) \, t } \tilde a(-\veceta, \vecy + \tfrac12 r^d \veceta )\, \tilde b(\veceta,  \vecy + \tfrac12 r^d \veceta).
\end{equation}
The functions $\tilde a$ and $\tilde b$ are rapidly decaying so this is uniformly bounded as $r \to 0$. By dominated convergence we thus obtain
\begin{equation}
\lim_{r\to 0} \scrA_0^{\alpha,r}(t) = \int_{\RR^{2d}} \d \vecy \d \veceta \, \e( \vecy\cdot\veceta t )  \tilde a(-\veceta, \vecy )\, \tilde b(\veceta,  \vecy ) \, \bm 1 [\Gamma(\alpha \veceta) = 1 ].
\end{equation}
Again, by the rapid decay of $\tilde a$ and $\tilde b$ this converges in the limit $\alpha \to 0$ and we obtain
\begin{equation}
\lim_{\alpha \to 0}\lim_{r\to 0} \scrA_0^{\alpha,r}(t) =\int_{\RR^{2d}} \d \vecy\d \veceta \, \e( \vecy\cdot\veceta t )  \tilde a(-\veceta, \vecy )\, \tilde b(\veceta,  \vecy) \\
= \int_{\RR^{2d}} \d \vecx \d \vecy \, a(\vecx-t\vecy, \vecy) \, b(\vecx,\vecy).
\end{equation}
\end{proof}

\section{Manipulating the Expansion}
For the higher order terms we perform a partitioning of the $\vecz_i$ integration region. To see why, note that \eqref{simple} has a product of factors of the form
\begin{equation}
\e^{-\alpha^d (1-\Gamma(\alpha(\veceta+r^{-d} \sum_{i=1}^j \vecz_i))) \, u_j }.
\end{equation}
If the argument $\alpha(\veceta+r^{-d} \sum_{i=1}^j \vecz_i)$ is large, then this entire factor becomes $\e^{-\alpha^d \, u_j}$, and hence the $u_j$ integral is exponentially damped. Our partition will be precisely into these damped and undamped regions. Let $\scrS = \{ s_1,\cdots,s_{p}\}\subset \{0,\cdots,m\}$ with $s_1 = 0$ and $s_{p}=m$ and write $\Pi_m$ for the set of all such $\scrS$. Define $\chi^\scrS : \RR^{d(m-1)} \to \RR$ by
\begin{equation}
 \chi^{\scrS}(\vecz_1,\dots, \vecz_{m-1}) = \left[ \prod_{j=2}^{p-1} \chi( \vecz_{s_j}) \right] \left[ \prod_{j \notin \scrS} (I-\chi)(\vecz_j) \right]
 \end{equation}
 where $\chi \in C_c^\infty(\RR^d\to \RR)$ is decreasing in $\|\vecz\|$ such that $\chi(\vecz)=1$ for all $\|\vecz\|<1$ and $\chi(\vecz) = 0$ for all $\|\vecz\| > 2$. This implies the bound
 \begin{equation} \label{chibound}
 \| \chi \|_{L^1} \leq \vol ( \scrB_{2})
 \end{equation}
 where $\scrB_r$ is the $d$-ball of radius $r$. Note that $\chi^{\scrS}$ forms a partition of unity: $\sum_{\scrS \in \Pi_m} \chi^{\scrS} = 1$; and also that by assumption on the support of $\Gamma$
\begin{equation}
(1-\chi(\alpha \vecz)) \e^{- \alpha^d (1-\Gamma(\alpha \vecz)) u} = (1-\chi(\alpha \vecz)) \e^{- \alpha^d u}.
\end{equation}
We put $\vecgamma = (\gamma_1,\dots,\gamma_m)$ and rewrite \eqref{simple} as
\begin{equation} \label{scrAseries}
\scrA_m^{\alpha,r}(t)=   \sum_{\vecgamma \in\{0,1\}^m} (-1)^{\gamma_1+\cdots+\gamma_m}  \sum_{\scrS\in \Pi_m}\scrA_{\vecgamma,\scrS}^{\alpha,r}(t) 
\end{equation}
where
\begin{equation} \label{simple2}
\begin{split}
\scrA_{\vecgamma,\scrS}^{\alpha,r}(t)  &= \sum_{\vecq_1, \cdots,\vecq_m \in \scrP} \lambda(r^{d-1} \vecq_1 ) \cdots \lambda(r^{d-1} \vecq_m)  \int_{\RR^{(m+2)d}}\d\vecy\d\veceta \d\vecz_1 \cdots \d\vecz_m \\
&\times \widehat W(\vecz_1) \cdots \widehat W(\vecz_m) \, \e(-r^{-1}\vecq_1 \cdot \vecz_1-\cdots-r^{-1}\vecq_m \cdot\vecz_m)\\
&\times\int_{\triangle_m(r^{-d}t)} \d u_0\cdots \d u_{m} \,  \left[ \prod_{j=0}^{m}\e( \xi_j u_j ) \e^{-\alpha^d (1-\Gamma(\alpha(\veceta+r^{-d}\sum_{i=1}^j \vecz_i)) )u_j} \right]\\
&\times \chi^{\scrS}(\alpha (\veceta+r^{-d} \vecz_1),\dots,\alpha (\veceta+r^{-d}(\vecz_1+\cdots+\vecz_{m-1}) )) \\ 
&\times \tilde a(- \veceta - r^{-d} \sum_{i=1}^m \vecz_i,  \vecy - \sum_{i=1}^m \gamma_i \vecz_i  +  \tfrac{1}{2}( r^{d}\veceta + \sum_{i=1}^m \vecz_i) ) \, \tilde b(\veceta, \vecy + \tfrac12 r^{d} \veceta).
\end{split}
\end{equation}
Note that all elements in the complement of $\scrS$ occur in $|\scrS|-1 = p-1$ contiguous blocks (possibly of size zero). Write $\kappa_i = s_{i+1}-s_i-1\geq 0$ for the number of elements in the $i^{\text{th}}$ block. To simplify notation we will use double subscripts to refer to the $j$th element of the $i$th block, e.g. $\vecz_{ij}:= \vecz_{s_i+j}$ where $0\leq j \leq \kappa_i$. When $j=0$ we will write $\vecz_{s_i}$ or $\vecz_{i0}$ interchangeably. We then put $\veceta=\veceta_1$ and for $i=2,\cdots, p$ we make the change of coordinates for $\vecz_{(i+1)0} $ by
$$ \veceta_{i+1}= r^{-d} (\vecz_{(i+1)0} + \sum_{j=1}^{\kappa_{i}} \vecz_{ij}).$$
This gives a factor of $r^{d^2(p-1)}$. In these new coordinates we have that
\begin{equation}
\vecq_1\cdot \vecz_1+\cdots + \vecq_m\cdot\vecz_m = \sum_{i=1}^{p-1}( r^d \vecq_{(i+1)0}\cdot \veceta_{i+1} + \sum_{j=1}^{\kappa_i} (\vecq_{ij} - \vecq_{(i+1)0}) \cdot\vecz_{ij} ).
\end{equation}
The product of potentials can be written
\begin{equation}
\widehat W(\vecz_1) \cdots \widehat W(\vecz_m) = \prod_{i=1}^{p-1} \widehat W(r^d \veceta_{i+1} - \sum_{j=1}^{\kappa_i} \vecz_{ij} )\prod_{j=1}^{\kappa_i} \widehat W(\vecz_{ij}) .
\end{equation}
By convention let us assume that $\gamma_{s_1}=\gamma_0 = 0$. For $i = 1,\cdots,p$ we have that $\xi_{s_i} = r^d \zeta_i$ where
\begin{equation}
\zeta_i  =  (\vecy -  \sum_{k=1}^{i-1} \sum_{\ell=1}^{\kappa_k} (\gamma_{k\ell} - \gamma_{(k+1)0} ) \vecz_{k\ell} )\cdot (\sum_{k=1}^i \veceta_k) \\ 
+ \tfrac12 r^d ( \| \sum_{k=1}^{i} \bar \gamma_{s_k} \veceta_k \|^2 -\| \sum_{k=1}^{i} \gamma_{s_k} \veceta_k \|^2).
\end{equation}
For $i=1,\dots,p-1$ and $j = 1, \dots, \kappa_i$ we have that
\begin{multline}
\xi_{ij} = \tfrac12(\|\vecy +  \sum_{k=1}^{i-1}\sum_{\ell=1}^{\kappa_k} (\bar\gamma_{k\ell} - \bar\gamma_{(k+1)0}) \vecz_{k\ell} + \sum_{\ell=1}^j \bar\gamma_{i\ell} \vecz_{i\ell} +r^{d} \sum_{k=1}^i \bar\gamma_{k0} \veceta_k\|^2 \\
 - \|\vecy -  \sum_{k=1}^{i-1}\sum_{\ell=1}^{\kappa_k} (\gamma_{k\ell} - \gamma_{(k+1)0}) \vecz_{k\ell}- \sum_{\ell=1}^j \gamma_{i\ell} \vecz_{i\ell} -r^{d} \sum_{k=1}^i \gamma_{k0} \veceta_k\|^2 ).
\end{multline}
The functions $\tilde a$ and $\chi^{\scrS}$ become
\begin{multline}
 \tilde a(- \veceta - r^{-d} \sum_{i=1}^m \vecz_i,  \vecy - \sum_{i=1}^m \gamma_i \vecz_i  +  \tfrac{1}{2}( r^{d}\veceta + \sum_{i=1}^m \vecz_i) ) \\
= \tilde a(- \sum_{i=1}^{p} \veceta_i,  \vecy - \sum_{i=1}^{p-1} \sum_{j=1}^{\kappa_i} (\gamma_{ij}-\gamma_{(i+1)0})  \vecz_{ij}  - \tfrac12 r^d \sum_{i=1}^p (\gamma_{i0}-\bar \gamma_{i0})\veceta_i ),
\end{multline}
and
\begin{multline}
 \chi^{\scrS}(\alpha (\veceta+r^{-d} \vecz_1),  \dots,\alpha (\veceta+r^{-d}(\vecz_1+\cdots+\vecz_{m-1}))) \\
= \left(\prod_{i=2}^{p-1} \chi( \alpha \sum_{k=1}^i \veceta_k) \right) \left( \prod_{i=1}^{p-1} \prod_{j=1}^{\kappa_i} (I-\chi)( \alpha (\sum_{k=1}^i \veceta_k + r^{-d} \sum_{\ell=1}^j \vecz_{i\ell} ) ) \right).
\end{multline}
We write $\vecH = (\veceta_1,\cdots,\veceta_{p}) $ and 
$$
\vecZ_\scrS=(\vecz_{11},\ldots,\vecz_{1\kappa_1},  \ldots, \vecz_{(p-1)1},\ldots,\vecz_{(p-1)\kappa_{p-1}})
$$ 
for the collection of remaining $\vecz_i$ variables. Make the substitution $u_{s_i} = r^{-d}\nu_i$, then equation \eqref{simple2} can now be written
\begin{equation} \label{simple3}
\begin{split}
\scrA_{\vecgamma,\scrS}^{\alpha,r}(t)  &= r^{d(d-1)(p-1)}\sum_{\vecq_1, \cdots,\vecq_m \in \scrP} \lambda(r^{d-1} \vecq_1 ) \cdots \lambda(r^{d-1} \vecq_m) \int_{\RR^{(m+2)d}}\d\vecy \d \vecH \d\vecZ_\scrS\,  \\
&\times F_{\vecgamma,\scrS}^r(\vecZ_\scrS,\vecH,\vecy) \left[ \prod_{i=1}^{p-1} \prod_{j=1}^{\kappa_i} (I-\chi)(\alpha ( \sum_{k=1}^i \veceta_k + r^{-d} \sum_{\ell=1}^j \vecz_{i\ell})) \right] \\
&\times  \left[ \prod_{i=1}^{p-1} \e\left(-r^{d-1} \vecq_{(i+1)0} \cdot\veceta_{i+1} - r^{-1} \sum_{j=1}^{\kappa_i} (\vecq_{ij} - \vecq_{(i+1)0}) \cdot\vecz_{ij}  \right) \right] \\
&\times\left[\prod_{i=2}^{p-1}\chi(\alpha\sum_{k=1}^i \veceta_k) \right] \int_{\RR_+^{m+1}} \delta(\nu_1+\cdots+\nu_{p} + r^d \sum_{i \notin \scrS} u_i - t)  \\
&\times\left[\prod_{i=1}^{p}  \e(\zeta_i \nu_{i})  \, \e^{-\alpha^d (1-\Gamma(\alpha \sum_{k=1}^i \veceta_k))\, r^{-d} \nu_i }\d \nu_i\right]  \left[ \prod_{i \notin \scrS}\e( \xi_{i}  u_{i} ) \e^{- \alpha^d u_{i}} \, \d u_{i} \right]
\end{split}
\end{equation}
where $F_{\vecgamma,\scrS}^r$ is defined by
\begin{equation}\begin{split} \label{Fdef}
F_{\vecgamma,\scrS}^r(\vecZ_\scrS,\vecH,\vecy) &= \left[\prod_{i=1}^{p-1} \bigg( \widehat W(r^d \veceta_{i+1} - \sum_{j=1}^{\kappa_i} \vecz_{ij} )\prod_{j=1}^{\kappa_i} \widehat W(\vecz_{ij}) \bigg) \right]\tilde b (\veceta_1,\vecy+\tfrac12r^d\veceta_1)\\
& \times \tilde a(- \sum_{i=1}^{p} \veceta_i,  \vecy - \sum_{i=1}^{p-1} \sum_{j=1}^{\kappa_i} (\gamma_{ij}-\gamma_{(i+1)0})  \vecz_{ij}  - \tfrac12 r^d \sum_{i=1}^p (\gamma_{i0}-\bar \gamma_{i0})\veceta_i  ).
\end{split}\end{equation}

\section{Computing the Limit $r \to 0$} \label{sec:computingthelimit}
We first separate diagonal and nondiagonal terms by writing
\begin{equation}
\scrA_{\vecgamma,\scrS}^{\alpha,r}(t) = \scrA_{\d,\vecgamma,\scrS}^{\alpha,r}(t) + \scrA_{\nd,\vecgamma,\scrS}^{\alpha,r}(t) 
\end{equation}
where $\scrA_{\d,\vecgamma,\scrS}^{r,\alpha}(t)$ is defined by restricting \eqref{simple3} to the diagonal $\vecq_{ij} = \vecq_{(i+1)0}$ for all $i=1,\dots,p-1$ and $j=1,\dots,\kappa_i$. The nondiagonal term contains the remainder of the summation.

\subsection{Nondiagonal Terms}

\begin{prop}[Upper bound on nondiagonal terms] \label{prop:convergence1} For $\alpha, t >0$, there exists a constant $C >0$ depending on $\alpha,t,W,a$ and $b$ such that
\begin{equation}
\begin{split}
\left| \scrA_{\nd,\vecgamma,\scrS}^{\alpha,r}(t)\right| &\leq C^m \, r  \, \log_2 (1+ r^{1-d} b_\scrP^{-1} ) \, \|\lambda\|_{1,\infty}^m.
\end{split}
\end{equation}
where $\|\lambda\|_{1,\infty} = \max \{ \|\lambda\|_{L^1}, \|\lambda\|_{L^\infty} \}$.
\end{prop}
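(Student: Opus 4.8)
The plan is to estimate $\scrA_{\nd,\vecgamma,\scrS}^{\alpha,r}(t)$ directly from formula \eqref{simple3}, with the diagonal $\vecq_{ij}=\vecq_{(i+1)0}$ ($1\le j\le\kappa_i$) deleted from the $\vecq$-summation, bounding every factor by its modulus except the two oscillations that carry genuine cancellation: the block-start phases $\e(-r^{d-1}\vecq_{(i+1)0}\cdot\veceta_{i+1})$, which drive the Riemann-sum estimate, and the phases $\e(-r^{-1}(\vecq_{ij}-\vecq_{(i+1)0})\cdot\vecz_{ij})$ riding on the factors $\widehat W(\vecz_{ij})$, which is where the smallness of the nondiagonal part must be produced.

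First I would dispose of the $(\nu,u)$-integration. Using $|\e(\zeta_i\nu_i)|=|\e(\xi_i u_i)|=1$, $\e^{-\alpha^d(1-\Gamma)r^{-d}\nu_i}\le 1$ and---crucially---that for every $i\notin\scrS$ the cutoff $(I-\chi)$ turns the companion damping weight into the honest $\e^{-\alpha^d u_i}$ (by the support property of $\Gamma$), the $\nu$-simplex contributes at most $t^{p-1}/(p-1)!$ and each of the $m-p+1$ variables $u_i$ contributes $\int_0^\infty\e^{-\alpha^d u_i}\d u_i=\alpha^{-d}$; this is uniform in all momenta and is absorbed, with the combinatorial factors, into $C^m$. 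Next, the $p-1$ block-start sums $\sum_{\vecq_{(i+1)0}\in\scrP}\lambda(r^{d-1}\vecq_{(i+1)0})\e(-r^{d-1}\vecq_{(i+1)0}\cdot\veceta_{i+1})(\cdots)$ are handled by \eqref{assumption1}: the surviving factors---$\tilde a,\tilde b$ Schwartz and the cutoffs $\chi(\alpha\sum_{k\le i}\veceta_k)$---localise every $\veceta_i$ to a set of $r$-independent size, so the Riemann error is negligible as $r\to0$ and each sum equals $r^{-d(d-1)}$ times a Schwartz weight in $\veceta_{i+1}$; the resulting $r^{-d(d-1)(p-1)}$ cancels the prefactor $r^{d(d-1)(p-1)}$ in \eqref{simple3}. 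The factor $\lambda(r^{d-1}\vecq_{ij})$ at each block-interior position is bounded by $\|\lambda\|_\infty$ (the sum itself being controlled via the oscillatory $\vecz_{ij}$-integral below), so that only powers of $\|\lambda\|_{1,\infty}$ appear.

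The heart of the argument is the nondiagonal gain. Pick a block-interior position $(i,j)$ with $\vecq_{ij}\ne\vecq_{(i+1)0}$, set $\vecw=\vecq_{ij}-\vecq_{(i+1)0}$, so $\vecw\in(\scrP-\vecq_{(i+1)0})\setminus\{\vecnull\}$ and $\|\vecw\|\ge b_\scrP$, and examine the $\vecz_{ij}$-integral, which contains $\widehat W(\vecz_{ij})\e(-r^{-1}\vecw\cdot\vecz_{ij})$. The obstruction to a plain non-stationary-phase bound is that the cutoffs $(I-\chi)\bigl(\alpha(\sum_{k\le i}\veceta_k+r^{-d}\sum_{\ell\le j'}\vecz_{i\ell})\bigr)$ with $j'\ge j$ depend on $\vecz_{ij}$ through $r^{-d}\vecz_{ij}$ and therefore have $\vecz_{ij}$-derivatives of size $\alpha r^{-d}$. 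I would resolve this by writing each such cutoff as $(I-\chi)=1-\chi$ and expanding the product. On any branch that retains a factor $\chi(\cdots)$ the corresponding $\vecz$-variable is confined to a ball of radius $O(r^d/\alpha)$, so estimating that $\vecz$-integral by its volume gains $O((r^d/\alpha)^d)$, which beats the $O(r^{-d(d-1)})$ cost of the attached block-interior $\vecq$-sum. On the single all-ones branch the remaining $\vecz_{ij}$-dependence---the combining factors $\widehat W(r^d\veceta_{i+1}-\sum\vecz_{i\ell})$, $\tilde a$, and the phases $\e(\xi u)$, the last controlled because the accompanying $\e^{-\alpha^d u}$ caps the $u$-weight at $O_\alpha(1)$ per derivative while the $\vecy$-integral converges by the Schwartz decay of $\tilde a$---is smooth with $r$-independent derivative bounds, so repeated integration by parts in $\vecz_{ij}$ produces a gain in powers of $r/\|\vecw\|$. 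Summing the resulting per-point bounds over the $\asymp\log_2(1+r^{1-d}b_\scrP^{-1})$ dyadic scales of $\|\vecw\|$ between $b_\scrP$ and the radius $\sim r^{1-d}$ of $\supp\lambda$, against the point count of each shell, and using the prefactor cancellation from the previous step, leaves the advertised bound $|\scrA_{\nd,\vecgamma,\scrS}^{\alpha,r}(t)|\le C^m r\log_2(1+r^{1-d}b_\scrP^{-1})\|\lambda\|_{1,\infty}^m$ after collecting the time-integral bound, the fixed $\widehat W$- and Schwartz-norms, and the combinatorics of the $\vecgamma$-sum and the $1=(1-\chi)+\chi$ expansion.

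The step I expect to be hardest is exactly this interplay between the $\alpha r^{-d}$-sized derivatives of the $(I-\chi)$-cutoffs and the oscillation $\e(-r^{-1}\vecw\cdot\vecz_{ij})$: one must split off the thin corridor where $\chi\ne0$ and verify that the volume gain there still defeats the cost of the corresponding momentum sum, and on the complement confirm that none of the other $\vecz_{ij}$-dependent factors---in particular the quadratic phases $\xi$ multiplied by the possibly large times $u_i$---spoils the integration by parts. A secondary nuisance is that the block-start sum adjacent to a nondiagonal position is coupled, through the very phases $\e(-r^{-1}(\vecq_{ij}-\vecq_{(i+1)0})\cdot\vecz_{ij})$, to the $\vecz_{ij}$-variables, so \eqref{assumption1} can be applied to it only after those variables have been localised by the preceding steps.
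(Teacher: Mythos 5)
Your overall architecture matches the paper's: bound the $(\nu,\vecu)$-integrals by $t^{p-1}/(p-1)!$ and $\alpha^{-d}$ per off-$\scrS$ variable, extract decay in the nondiagonal momentum differences from the $\vecz_{ij}$-Fourier transform (one derivative per coordinate suffices), sum dyadically over $\|\vecq_{ij}-\vecq_{(i+1)0}\|\ge b_\scrP$ using uniform discreteness to produce $r\log_2(1+r^{1-d}b_\scrP^{-1})$, and absorb the block-start sums via \eqref{assumption1}. However, there is a genuine gap at the step you yourself flag as hardest, and it is not only the cutoffs: in the original variables, \emph{every} factor attached to block $i$ that depends on $\vecz_{i\ell}$ does so through the partial sums $\sum_{\ell'\le j}\vecz_{i\ell'}$ with $j\ge\ell$ --- not just the $(I-\chi)$ factors but also the phases $\e(\xi_{ij}u_{ij})$ (and the combining factor $\widehat W(r^d\veceta_{i+1}-\sum_j\vecz_{ij})$). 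Since you must integrate by parts in all $m+1-p$ block-interior variables (each uncompensated $\vecq_{ij}$-sum costs $r^{-d(d-1)}$), the $\ell$-th variable in a block of length $\kappa_i$ is differentiated through up to $\kappa_i-\ell+1$ factors, and Leibniz produces on the order of $(\kappa_i!)^d$ terms per block; equivalently, a single $u_{ij}$ can acquire degree up to $jd$, and $\int_0^\infty u^{N}\e^{-\alpha^d u}\,\d u\sim N!$ then destroys any bound of the form $C^m$, which is indispensable for summing the series in Theorem \ref{prop:convergence2}. This is exactly the obstruction the paper names ("this would preclude us from obtaining an upper bound of the form $C^m$"), and its resolution is the content of Lemma \ref{prop2lem1}: a reordering of each block by the values of $\gamma_{ij}$ and a change to running-momentum variables $\vecy_{ij}$ (with the $\vecq$'s relabelled so the dual sums run over \emph{consecutive} differences $\vecq_{i(j+1)}-\vecq_{ij}$), after which each cutoff and each phase depends on only two of the new variables, each $u_{ij}$ receives at most $2d$ powers, and the term count is $\le C^m$ (Lemma \ref{prop2lem2}).

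Your proposed workaround --- expanding $\prod(I-\chi)=\prod(1-\chi)$ into branches, using a volume bound $O((r^d/\alpha)^d)$ on branches retaining a $\chi$ --- does not close this gap. First, it addresses only the cutoffs and says nothing about the identical partial-sum structure of the phases $\xi_{ij}$; your remark that $\e^{-\alpha^d u}$ "caps the $u$-weight at $O_\alpha(1)$ per derivative" ignores that the derivatives concentrate on the same few $u_{ij}$'s, producing factorial moments. Second, even for the cutoffs the branch expansion is incomplete: a retained factor $\chi\bigl(\alpha(\sum_{k\le i}\veceta_k+r^{-d}\sum_{\ell\le j'}\vecz_{i\ell})\bigr)$ confines only the single partial sum (so one variable gains a volume factor), but it still depends on all $\vecz_{i\ell}$ with $\ell\le j'$ and carries $\vecz$-derivatives of size $\alpha r^{-d}$, so it obstructs the integrations by parts you still need for those variables' $\vecq$-sums; decoupling it requires precisely a change of variables of the type the paper performs. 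To repair the proof you should adopt the substitution of Lemma \ref{prop2lem1} (or an equivalent reorganisation guaranteeing each integration variable appears in a uniformly bounded number of factors) before differentiating.
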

The idea of the proof is simple: we note that \eqref{simple3} has the form of a Fourier transform in the $\vecz_{ij}$ variables; if we can show that this function, as well as the partial derivative $\prod_{i=1}^{p-1} \prod_{j=1}^{\kappa_i} \prod_{k=1}^d \partial_{z_{ijk}}$ of this function, is in $L^1(\RR^{d(m+1-p)})$, then the Fourier transform is bounded and decays at least linearly in each coordinate direction. This will allow us to sum over the nondiagonal terms and obtain the logarithmic bound needed. The only issue is in taking this partial derivative. Note that \eqref{simple3} contains factors of the form
\begin{equation}
\prod_{j=1}^{\kappa_i} (I-\chi)\left(\alpha(\sum_{k=1}^i \veceta_k + r^{-d} \sum_{\ell=1}^j \vecz_{i\ell}) \right).
\end{equation}
Taking the partial derivative $\prod_{j=1}^{\kappa_i} \prod_{k=1}^d \partial_{z_{ijk}}$ of this factor alone yields $ (\kappa_i!)^d$ terms by the product rule. Recall that $\kappa_i$ may be as large as $m-1$, so this would preclude us from obtaining an upper bound of the form $C^m$ as is needed. The solution to this is to first perform a carefully chosen variable substitution. Write $B_i :=\{1, \dots, \kappa_i\}$ and define
\begin{equation}
\begin{split}
\tau_{ij} &= \sum_{k=1}^j \gamma_{ik}, \\
\mu_{ij} &= \kappa_i + 1 - \sum_{k=1}^j \bar \gamma_{ik}.
\end{split}
\end{equation}
We also write $\tau_i = \tau_{i \kappa_i}$ and $\mu_i = \mu_{i\kappa_i}$ -- observe that $\mu_i = \tau_i+1$. 
\begin{lem} \label{prop2lem1} Let $M : \RR^{dm} \to \RR^{(m+1-p)}$ be defined component-wise for $i=1,\dots,p-1$ and $j=1,\dots, \kappa_i$ by
\begin{equation} \label{prop2lem1Mdef}
M(\vecq_1,\dots,\vecq_m)_{ij} = \begin{cases} \vecq_{i(j+1)}-\vecq_{ij} + \vecq_{(i+1)1} - \vecq_{(i+1)(\kappa_{i+1}+1)} & j = \sigma_i, \quad i=1,\dots,p-2 \\ \vecq_{i(j+1)}-\vecq_{ij} & \text{otherwise}\end{cases}.
\end{equation}
Then, we have that
\begin{equation} \label{prop2lem1statement}
\begin{split}
\scrA_{\nd,\vecgamma,\scrS}^{\alpha,r}(t)  &= r^{d(d-1)(p-1)}\sum_{\substack{\vecQ \in \scrP^m \\ M(\vecQ)\neq 0 }} \lambda(r^{d-1} \vecq_{1} ) \cdots \lambda(r^{d-1} \vecq_{m}) \, \widehat J_{\vecq_{1 \mu_1} \cdots \vecq_{(p-1)\mu_{p-1}} }^r(M(r^{-1} \vecq_1,\dots, r^{-1}\vecq_m) )
\end{split}
\end{equation}
where the hat denotes the usual Fourier transform and
\begin{equation} \label{prop2lem1Jdef}
\begin{split}
J_{\vecq_{1\mu_1} \cdots \vecq_{(p-1)\mu_{p-1}} }^r(\vecY_\scrS )  &= \int_{\RR^{(p+1)d}}\d\vecy \d \vecH \,\tilde a(- \sum_{i=1}^{p} \veceta_i,  \vecy_{(p-1) \sigma_{p-1}}  - \tfrac12 r^d \sum_{i=1}^p (\gamma_{i0}-\bar \gamma_{i0})\veceta_i ) \, \tilde b (\veceta_1,\vecy+\tfrac12r^d\veceta_1) \\
& \times \left[\prod_{i=1}^{p-1} \bigg( [\prod_{j=1}^{\tau_i} \widehat W(\vecy_{i(j-1)} - \vecy_{ij})] \widehat W( \vecy_{i \tau_i} - \vecy_{i \mu_i} + r^d \veceta_{i+1}) [\prod_{j=\mu_i}^{\kappa_i} \widehat W(\vecy_{i j} - \vecy_{i(j+1)})] \bigg) \right] \\
&\times  \left[ \prod_{i=1}^{p-1} \prod_{j=1}^{\kappa_i} (I-\chi)(\alpha ( \sum_{k=1}^i \veceta_k +  r^{-d} (\vecy_{i \mu_{ij} }-\vecy_{i \tau_{ij} }  )  )) \right] \left[\prod_{i=2}^{p-1}\chi(\alpha\sum_{k=1}^i \veceta_k) \right] \\
&\times  \left[ \prod_{i=1}^{p-1} \e\left(-r^{d-1} \vecq_{i \mu_i} \cdot\veceta_{i+1} \right) \right]  \int_{\RR_+^{m+1}} \delta(\nu_1+\cdots+\nu_{p} + r^d \sum_{i \notin \scrS} u_i - t)  \\
&\times\left[\prod_{i=1}^{p}  \e(\zeta_i' \nu_{i})  \, \e^{-\alpha^d (1-\Gamma(\alpha \sum_{k=1}^i \veceta_k))\, r^{-d} \nu_i }\d \nu_i\right]  \left[ \prod_{i \notin \scrS}\e( \xi_{i}'  u_{i} ) \e^{- \alpha^d u_{i}} \, \d u_{i} \right].
\end{split}
\end{equation}
\end{lem}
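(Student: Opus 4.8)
The plan is to prove the identity \eqref{prop2lem1statement} by a single Jacobian‑one change of variables in the $\vecz_{ij}$'s, followed by the (lengthy but mechanical) identification of each resulting factor with its counterpart in \eqref{prop2lem1Jdef}, and then by matching the two summation sets. The starting point is that in \eqref{simple3} all of the $\vecQ$‑dependence sits in the linear phase $\e(-r^{-1}\sum_{i,j}(\vecq_{ij}-\vecq_{(i+1)0})\cdot\vecz_{ij})$, so after pulling everything else into a single integrand the $\vecZ_\scrS$‑integral is a Fourier transform; the job is to rewrite it so that the frequency is $M(r^{-1}\vecQ)$ and the integrand is $J^r$.

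The substitution is the usual ``unfolding'' of the collision ladder into left and right sub‑trajectories, carried out block by block. In the $i$‑th block I would introduce momenta $\vecy_{i0},\dots,\vecy_{i\tau_i}$ (a new vertex at each index $k$ with $\gamma_{ik}=1$) and $\vecy_{i(\kappa_i+1)},\dots,\vecy_{i\mu_i}$ (one at each $k$ with $\gamma_{ik}=0$), declaring $\vecz_{ik}$ to be the corresponding consecutive difference $\vecy_{i(\tau_{ik}-1)}-\vecy_{i\tau_{ik}}$ or $\vecy_{i\mu_{ik}}-\vecy_{i(\mu_{ik}+1)}$. Matching the middle potential $\widehat W(r^d\veceta_{i+1}-\sum_j\vecz_{ij})$ with $\widehat W(\vecy_{i\tau_i}-\vecy_{i\mu_i}+r^d\veceta_{i+1})$ forces the two entry momenta to coincide, $\vecy_{i0}=\vecy_{i(\kappa_i+1)}$ — a reflection of the block being flanked by the strongly damped $\scrS$‑collisions — and these entry momenta are then fixed, through the $\veceta$‑shifts, in terms of $\vecy$ and $\vecH$, with block $i+1$'s entry determined by block $i$'s exits. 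The resulting map is triangular in the block ordering with unipotent diagonal blocks, hence unimodular, which is consistent with the prefactor $r^{d(d-1)(p-1)}$ being the same in \eqref{simple3} and \eqref{prop2lem1statement}. Under it the partial sums obey $\sum_{\ell=1}^j\vecz_{i\ell}=\vecy_{i\mu_{ij}}-\vecy_{i\tau_{ij}}$, so the factors $(I-\chi)(\alpha(\sum_{k\le i}\veceta_k+r^{-d}\sum_{\ell\le j}\vecz_{i\ell}))$ become $(I-\chi)(\alpha(\sum_{k\le i}\veceta_k+r^{-d}(\vecy_{i\mu_{ij}}-\vecy_{i\tau_{ij}})))$; the product of potentials telescopes into the chain in \eqref{prop2lem1Jdef}; the argument of $\tilde a$ collapses to $\vecy_{(p-1)\sigma_{p-1}}$; and the $\chi$‑factors, $\tilde b$ and the exponents $\xi_{ij},\zeta_i$ (now $\xi_{ij}',\zeta_i'$) are carried over. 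The one delicate step is the phase: expanding $\sum_{i,j}(\vecq_{ij}-\vecq_{(i+1)0})\cdot\vecz_{ij}$ in the new variables and collecting the coefficient of each $\vecy_{ij}$ reproduces exactly $M(\vecQ)$ of \eqref{prop2lem1Mdef} — the generic entries are the consecutive differences $\vecq_{i(j+1)}-\vecq_{ij}$, while the entry at $j=\sigma_i$ picks up the extra term $\vecq_{(i+1)1}-\vecq_{(i+1)(\kappa_{i+1}+1)}$ because the shared entry momentum of block $i$ feeds into block $i+1$ — plus a $\vecY_\scrS$‑independent remainder which, combined with the $\prod_i\e(-r^{d-1}\vecq_{(i+1)0}\cdot\veceta_{i+1})$ already present in \eqref{simple3}, assembles into $\prod_i\e(-r^{d-1}\vecq_{i\mu_i}\cdot\veceta_{i+1})$. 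Collecting everything that does not depend on $\vecY_\scrS$ and integrating over $\vecy$, $\vecH$, the $\nu_i$ and the $u_i$ ($i\notin\scrS$) yields precisely $J^r_{\vecq_{1\mu_1}\cdots\vecq_{(p-1)\mu_{p-1}}}$ as in \eqref{prop2lem1Jdef}, and the remaining $\vecY_\scrS$‑integral is its Fourier transform at $M(r^{-1}\vecq_1,\dots,r^{-1}\vecq_m)$. The interchange of this integral with the $\vecy,\vecH,\nu,u$ integrals is legitimate by absolute integrability: $\widehat W,\tilde a,\tilde b$ are Schwartz, the $\delta$‑constraint confines the $\nu_i$ to $[0,t]$, and each remaining $u_i$ carries the damping $\e^{-\alpha^d u_i}$.

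It then remains to match the summation sets. By construction $\scrA_{\d,\vecgamma,\scrS}^{\alpha,r}$ is the part of the $\vecQ$‑sum with $\vecq_{ij}=\vecq_{(i+1)0}$ for all $i,j$, so it suffices to show this ``diagonal'' equals $\{\vecQ:M(\vecQ)=0\}$. The inclusion ``$\subseteq$'' is immediate: on the diagonal every consecutive difference vanishes, and the extra term in the $j=\sigma_i$ component is $\vecq_{(i+1)1}-\vecq_{(i+1)(\kappa_{i+1}+1)}$, a difference of two block‑$(i+1)$ points both equal to $\vecq_{(i+2)0}$. Conversely, one argues by downward induction on $i$: block $p-1$ carries no cross‑block correction, so $M(\vecQ)=0$ already forces $\vecq_{(p-1)1}=\dots=\vecq_{(p-1)(\kappa_{p-1}+1)}=\vecq_{p0}$; once all block‑$(i+1)$ points are known to be equal, the $j=\sigma_i$ entry of $M(\vecQ)$ reduces to a consecutive difference, so the vanishing of all block‑$i$ entries forces $\vecq_{i1}=\dots=\vecq_{i\kappa_i}=\vecq_{(i+1)0}$. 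Hence $\scrA_{\nd,\vecgamma,\scrS}^{\alpha,r}$, being the complementary part of the sum, runs over $\{\vecQ\in\scrP^m:M(\vecQ)\neq0\}$, which is \eqref{prop2lem1statement}.

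The main obstacle is entirely in the bookkeeping of the substitution: writing down the entry momenta $\vecy_{i0}=\vecy_{i(\kappa_i+1)}$ correctly — they are coupled both to $\vecy$ and to the $\veceta_{i+1}$'s introduced in the earlier reduction to \eqref{simple3} — checking that the induced map is genuinely unimodular, and pushing the expansion of the $\vecz$‑phase far enough to see that the $\vecY_\scrS$‑frequency is exactly $M(\vecQ)$ with its $j=\sigma_i$ correction and that the leftover $\veceta_{i+1}$‑phase is exactly $\prod_i\e(-r^{d-1}\vecq_{i\mu_i}\cdot\veceta_{i+1})$. The rest is routine.
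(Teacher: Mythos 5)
Your proposal is correct and follows essentially the same route as the paper: the same block-by-block telescoping substitution $\vecz_{ij}\mapsto\vecy_{i(j-1)}-\vecy_{ij}$ (resp.\ $\vecy_{ij}-\vecy_{i(j+1)}$) with the identification $\vecy_{i0}=\vecy_{i(\kappa_i+1)}=\vecy_{(i-1)\sigma_{i-1}}$, followed by reading off the $\vecY_\scrS$-frequency as $M(r^{-1}\vecQ)$. The only (welcome) addition is that you spell out the downward-induction argument showing the diagonal set coincides with $\{M(\vecQ)=0\}$, which the paper leaves implicit.
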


\begin{proof}
Permute the indices in each block, so that all those indices $s_i+j$ with $\gamma_{ij}=1$ come first, in their original order, and all those indices with $\gamma_{ij}=0$ come last, in reverse order. The equation \eqref{simple3} can be written
\begin{equation} \label{simple4}
\begin{split}
\scrA_{\vecgamma,\scrS}^{\alpha,r}(t)  &= r^{d(d-1)(p-1)}\sum_{\vecq_1, \cdots,\vecq_m \in \scrP} \lambda(r^{d-1} \vecq_1 ) \cdots \lambda(r^{d-1} \vecq_m) \int_{\RR^{(m+2)d}}\d\vecy \d \vecH \d\vecZ_\scrS\,  \\
&\times F_{\vecgamma,\scrS}^r(\vecZ_\scrS,\vecH,\vecy) \left[ \prod_{i=1}^{p-1} \prod_{j=1}^{\kappa_i} (I-\chi)(\alpha ( \sum_{k=1}^i \veceta_k + r^{-d} \sum_{\ell=1}^{\tau_{ij}} \vecz_{i\ell} + r^{-d} \sum_{\ell=\mu_{ij}}^{\kappa_i} \vecz_{i \ell} )) \right] \\
&\times  \left[ \prod_{i=1}^{p-1} \e\left(-r^{d-1} \vecq_{(i+1)0} \cdot\veceta_{i+1} - r^{-1} \sum_{j=1}^{\kappa_i} (\vecq_{ij} - \vecq_{(i+1)0}) \cdot\vecz_{ij}  \right) \right] \\
&\times\left[\prod_{i=2}^{p-1}\chi(\alpha\sum_{k=1}^i \veceta_k) \right] \int_{\RR_+^{m+1}} \delta(\nu_1+\cdots+\nu_{p} + r^d \sum_{i \notin \scrS} u_i - t)  \\
&\times\left[\prod_{i=1}^{p}  \e(\zeta_i \nu_{i})  \, \e^{-\alpha^d (1-\Gamma(\alpha \sum_{k=1}^i \veceta_k))\, r^{-d} \nu_i }\d \nu_i\right]  \left[ \prod_{i \notin \scrS}\e( \xi_{i}^*  u_{i} ) \e^{- \alpha^d u_{i}} \, \d u_{i} \right]
\end{split}
\end{equation}
where $\zeta_i$ and $F_{\gamma,\scrS}^r$ are defined as before, and
\begin{multline}
\xi_{ij}^* = \tfrac12(\|\vecy +  \sum_{k=1}^{i-1}\sum_{\ell=1}^{\kappa_k} (\bar\gamma_{k\ell} - \bar\gamma_{(k+1)0}) \vecz_{k\ell} + \sum_{\ell=\mu_{ij}}^{\kappa_i} \vecz_{i\ell} +r^{d} \sum_{k=1}^i \bar\gamma_{k0} \veceta_k\|^2 \\
 - \|\vecy -  \sum_{k=1}^{i-1}\sum_{\ell=1}^{\kappa_k} (\gamma_{k\ell} - \gamma_{(k+1)0}) \vecz_{k\ell}- \sum_{\ell=1}^{\tau_{ij}}\vecz_{i\ell} -r^{d} \sum_{k=1}^i \gamma_{k0} \veceta_k\|^2 ).
\end{multline}
We now perform the substitutions
\begin{equation}
\vecz_{ij} = \begin{cases} \vecy_{i(j-1)}-\vecy_{ij}  & j \leq \tau_{i} \\ \vecy_{i j} - \vecy_{i(j+1)} & j \geq \mu_i \end{cases}
\end{equation}
with the convention $\vecy_{i0} = \vecy_{i (\kappa_i+1)} = \vecy_{(i-1) \sigma_{i-1}}$
and $\sigma_i = \tau_i + \gamma_{(i+1)0}$. Note that
\begin{equation}
\sum_{j=1}^{\kappa_i} (\gamma_{(i+1)0}-\gamma_{ij}) \vecz_{ij} = \vecy_{i \sigma_i} - \vecy_{(i-1) \sigma_{i-1} }.
\end{equation}
We thus have
\begin{equation} \label{simple5}
\begin{split}
\scrA_{\vecgamma,\scrS}^{\alpha,r}(t)  &= r^{d(d-1)(p-1)}\sum_{\vecq_1, \cdots,\vecq_m \in \scrP} \lambda(r^{d-1} \vecq_1 ) \cdots \lambda(r^{d-1} \vecq_m) \int_{\RR^{(m+2)d}}\d\vecy \d \vecH \d\vecY_\scrS\,  \\
&\times G_{\vecgamma,\scrS}^r(\vecY_\scrS,\vecH,\vecy) \left[ \prod_{i=1}^{p-1} \prod_{j=1}^{\kappa_i} (I-\chi)(\alpha ( \sum_{k=1}^i \veceta_k +  r^{-d} (\vecy_{i \mu_{ij} }-\vecy_{i \tau_{ij} }  )  )) \right] \\
&\times  \left[ \prod_{i=1}^{p-1} \e\left(-r^{d-1} \vecq_{(i+1)0} \cdot\veceta_{i+1} + r^{-1}  \vecq_{(i+1)0} \cdot (\vecy_{i\mu_i}-\vecy_{i\tau_i} ) \right) \right] \\
&\times \left[ \prod_{i=1}^{p-1} \e \left( - r^{-1} (\vecq_{i1} \cdot \vecy_{i0} +(\vecq_{i2}-\vecq_{i1}) \cdot \vecy_{i1} + \cdots + (\vecq_{i \tau_i} - \vecq_{i (\tau_i-1)}) \cdot \vecy_{i (\tau_i-1)} - \vecq_{i \tau_i} \cdot \vecy_{i \tau_i} ) \right) \right] \\ 
&\times \left[ \prod_{i=1}^{p-1} \e \left( - r^{-1} ( \vecq_{i \mu_i} \cdot \vecy_{i \mu_i} + (\vecq_{i (\mu_i+1)} - \vecq_{i \mu_i}) \cdot \vecy_{i(\mu_i+1)} + \cdots + (\vecq_{i \kappa_i} - \vecq_{i (\kappa_i-1)}) \cdot \vecy_{i \kappa_i} -\vecq_{i\kappa_i} \cdot \vecy_{i(\kappa_i+1)} ) \right) \right] \\ 
&\times\left[\prod_{i=2}^{p-1}\chi(\alpha\sum_{k=1}^i \veceta_k) \right] \int_{\RR_+^{m+1}} \delta(\nu_1+\cdots+\nu_{p} + r^d \sum_{i \notin \scrS} u_i - t)  \\
&\times\left[\prod_{i=1}^{p}  \e(\zeta_i' \nu_{i})  \, \e^{-\alpha^d (1-\Gamma(\alpha \sum_{k=1}^i \veceta_k))\, r^{-d} \nu_i }\d \nu_i\right]  \left[ \prod_{i \notin \scrS}\e( \xi_{i}'  u_{i} ) \e^{- \alpha^d u_{i}} \, \d u_{i} \right]
\end{split}
\end{equation}
where 
\begin{equation}
\begin{split}
\zeta_i'  &=  \vecy_{(i-1)\sigma_{i-1}} \cdot (\sum_{k=1}^i \veceta_k) + \tfrac12 r^d ( \| \sum_{k=1}^{i} \bar \gamma_{s_k} \veceta_k \|^2 -\| \sum_{k=1}^{i} \gamma_{s_k} \veceta_k \|^2), \\
\xi_{ij}' &= \tfrac12(\|\vecy_{i \mu_{ij}} +r^{d} \sum_{k=1}^i \bar\gamma_{k0} \veceta_k\|^2  - \|\vecy_{i \tau_{ij} } -r^{d} \sum_{k=1}^i \gamma_{k0} \veceta_k\|^2 ),
\end{split}
\end{equation}
and
\begin{equation}\begin{split} \label{Gdef}
G_{\vecgamma,\scrS}^r(\vecY_\scrS,\vecH,\vecy) &= \left[\prod_{i=1}^{p-1} \bigg( [\prod_{j=1}^{\tau_i} \widehat W(\vecy_{i(j-1)} - \vecy_{ij})] \widehat W( \vecy_{i \tau_i} - \vecy_{i \mu_i} + r^d \veceta_{i+1}) [\prod_{j=\mu_i}^{\kappa_i} \widehat W(\vecy_{i j} - \vecy_{i(j+1)})] \bigg) \right]\\
& \times \tilde a(- \sum_{i=1}^{p} \veceta_i,  \vecy_{(p-1) \sigma_{p-1}}  - \tfrac12 r^d \sum_{i=1}^p (\gamma_{i0}-\bar \gamma_{i0})\veceta_i ) \, \tilde b (\veceta_1,\vecy+\tfrac12r^d\veceta_1).
\end{split}\end{equation}
Finally, we relabel the $\vecq_i$ indices according to the map
\begin{equation}
s_i+j \mapsto \begin{cases} s_i + j & 1 \leq j \leq \tau_{i} \\ s_i+j+1 & \mu_{i} \leq j \leq \kappa_i \\ s_i + \mu_i & j = \kappa_i+1 \end{cases}.
\end{equation}
We thus obtain
\begin{equation} \label{simple6}
\begin{split}
\scrA_{\vecgamma,\scrS}^{\alpha,r}(t)  &= r^{d(d-1)(p-1)}\sum_{\vecq_1, \cdots,\vecq_m \in \scrP} \lambda(r^{d-1} \vecq_1 ) \cdots \lambda(r^{d-1} \vecq_m) \int_{\RR^{(m+2)d}}\d\vecy \d \vecH \d\vecY_\scrS\,  \\
&\times G_{\vecgamma,\scrS}^r(\vecY_\scrS,\vecH,\vecy) \left[ \prod_{i=1}^{p-1} \prod_{j=1}^{\kappa_i} (I-\chi)(\alpha ( \sum_{k=1}^i \veceta_k +  r^{-d} (\vecy_{i \mu_{ij} }-\vecy_{i \tau_{ij} }  )  )) \right] \\
&\times  \left[ \prod_{i=1}^{p-1} \e\left(-r^{d-1} \vecq_{i \mu_i} \cdot\veceta_{i+1} -r^{-1} (\vecq_{i1} - \vecq_{i \kappa_i})\cdot \vecy_{(i-1)\sigma_{i-1}} -r^{-1} \sum_{j=1}^{\kappa_i} (\vecq_{i (j+1)}-\vecq_{i j}) \cdot \vecy_{ij}  \right) \right] \\
&\times\left[\prod_{i=2}^{p-1}\chi(\alpha\sum_{k=1}^i \veceta_k) \right] \int_{\RR_+^{m+1}} \delta(\nu_1+\cdots+\nu_{p} + r^d \sum_{i \notin \scrS} u_i - t)  \\
&\times\left[\prod_{i=1}^{p}  \e(\zeta_i' \nu_{i})  \, \e^{-\alpha^d (1-\Gamma(\alpha \sum_{k=1}^i \veceta_k))\, r^{-d} \nu_i }\d \nu_i\right]  \left[ \prod_{i \notin \scrS}\e( \xi_{i}'  u_{i} ) \e^{- \alpha^d u_{i}} \, \d u_{i} \right].
\end{split}
\end{equation}
The result then follows.
\end{proof}

\begin{lem} \label{prop2lem2} There exists a constant $C_J >0$ such that
\begin{multline}
\left| \widehat{J}_{\vecq_{1 \mu_1} \cdots \vecq_{(p-1) \mu_{p-1}} }^r( \vecxi_1,\dots,\vecxi_{m+1-p}) \right| \leq C_J^m  \frac{\langle t\rangle^{(d+1)p-1}}{(p-1)!} \, \frac{1}{\alpha^{d (m-1+p)} } \\
\| W \|_{2d}^m \, \|\chi\|_{L^1}^{p-2} \, \| \, a \, \|_{d}^* \, \| \, b \, \|_{L^1} \,\prod_{i=1}^{m+1-p} \prod_{j=1}^d \min\{ 1, \xi_{ij}^{-1}  \}
\end{multline}
where
\begin{equation}
\begin{split}
\| W \|_{N} &= \sup_{p \geq 1} \sup_{\|\vecbeta_1\|,\|\vecbeta_2\| \leq N}  \| \vecz^{\vecbeta_1} \partial_{\vecz}^{\vecbeta_2} W \|_{L^p},  \\
\| \, a \, \|_{N}^* &=\sup_{\|\vecbeta_1\|,\|\vecbeta_2\| \leq N} \int_{\RR^d} \left( \int_{\RR^d}  | \vecy^{\vecbeta_1} \partial_{\vecy}^{\vecbeta_2} \tilde a(\veceta, \vecy) |^2 \d \vecy \right)^{1/2} \d \veceta,
\end{split}
\end{equation}
and $\vecbeta_1$, $\vecbeta_2$ are multi-indices.
\end{lem}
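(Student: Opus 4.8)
The plan is to estimate $\widehat J^r$ through $L^1$-bounds on the $\vecY_\scrS$-derivatives of $J^r$. Since $J^r$ is, by \eqref{prop2lem1Jdef}, the Fourier transform in the $\vecY_\scrS$-variables of an explicit integral over $\vecy$ and $\vecH=(\veceta_1,\dots,\veceta_p)$, and since $|\widehat g(\vecxi)|\le\|g\|_{L^1}$ while $|\xi_k\widehat g(\vecxi)|\le(2\pi)^{-1}\|\partial_k g\|_{L^1}$, the product $\prod_{i,j}\min\{1,\xi_{ij}^{-1}\}$ will follow — deciding coordinate by coordinate whether or not to integrate by parts — once one proves
\[
\bigl\|\partial^{\vecbeta}_{\vecY_\scrS}J^r\bigr\|_{L^1(\vecY_\scrS)}\;\le\;C_J^m\,\frac{\langle t\rangle^{(d+1)p-1}}{(p-1)!}\,\frac{1}{\alpha^{d(m-1+p)}}\,\|W\|_{2d}^m\,\|\chi\|_{L^1}^{p-2}\,\|a\|_d^*\,\|b\|_{L^1}
\]
uniformly over multi-indices $\vecbeta\in\{0,1\}^{(m+1-p)d}$, i.e. differentiating each coordinate of each $\vecy_{ij}$ at most once. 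So the task reduces to this single $L^1$-estimate with an $m$-uniform constant.

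Fix $\vecbeta$ and expand $\partial^\vecbeta_{\vecY_\scrS}J^r$ by Leibniz's rule. For the inner $(\nu,u)$-integral the phases $\e(\zeta_i'\nu_i)$, $\e(\xi_i'u_i)$ and the damping factors have modulus $\le1$, so after integrating out the delta the $\nu$-simplex contributes $\langle t\rangle^{p-1}/(p-1)!$ and each of the $m+1-p$ factors $\int_0^\infty\e^{-\alpha^d u_i}\,\d u_i$ contributes $\alpha^{-d}$, yielding $\alpha^{-d(m+1-p)}$. The point is to arrange that the derivatives falling on the oscillatory phases do not spoil this. A derivative in $\vecy_{ij}$ that lands on the product $\prod_l\e(\xi'_{il}u_{il})$ produces precisely $\partial_{y_{ij,k}}$ of that product; integrating by parts in $\vecy_{ij}$ inside the $(\vecy,\vecH)$-integral transfers this derivative onto the neighbouring $\widehat W$-factors — the key algebraic fact being that $\xi'_{il}$ is quadratic, hence Gaussian-type, in $\vecy_{ij}$ — so no power of $\alpha^{-1}$ or $\langle t\rangle$ is generated, only one further derivative on $W$. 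A derivative landing on $\e(\zeta_i'\nu_i)$ instead brings down $\nu_i\,(\sum_{k\le i}\veceta_k)$, which cannot be removed this way since $\zeta_i'$ is only linear in the relevant variable; here $\nu_i\le t$ costs a power of $\langle t\rangle$ and $\sum_{k\le i}\veceta_k$ costs a power of $\alpha^{-1}$ (it sits under the cutoff $\chi(\alpha\sum_{k\le i}\veceta_k)$, or is controlled by moments of $\tilde a,\tilde b$). Running over the junction variables gives the extra $\langle t\rangle^{pd}$ and $\alpha^{-pd}$, which with the factors below assemble exactly to $\langle t\rangle^{(d+1)p-1}/(p-1)!$ and $\alpha^{-d(m-1+p)}$.

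It then remains to integrate the product of $\widehat W$-factors (some once-differentiated, some carrying a polynomial weight) over $\vecy$ and $\vecY_\scrS$ together with the cutoffs and symbols over $\vecH$. Passing to the partial sums $\vecsigma^{(i)}=\sum_{k\le i}\veceta_k$, the $p-2$ cutoffs $\chi(\alpha\vecsigma^{(i)})$ localise the $\vecsigma^{(i)}$ to balls of radius $2\alpha^{-1}$, contributing $\|\chi\|_{L^1}^{p-2}\alpha^{-d(p-2)}$, while $\vecsigma^{(1)}=\veceta_1$ and $\vecsigma^{(p)}=\sum_i\veceta_i$ are integrated against the rapid decay of $\tilde b(\veceta_1,\cdot)$ and $\tilde a(-\sum_i\veceta_i,\cdot)$; a Cauchy--Schwarz in the slot $\vecy_{(p-1)\sigma_{p-1}}$ against its neighbouring $\widehat W$-factor yields $\|b\|_{L^1}$ and $\|a\|_d^*$. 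The $\widehat W$-chains telescope, so integrating along the chain produces $\|W\|_{2d}^m$ — the index $2d$ being exactly what absorbs, per factor, the at most $2d$ derivatives and polynomial weights created above. Finally the number of Leibniz terms is $\le C^m$: although a given $\vecy_{ij}$ may occur in many factors (many $\xi'$-phases and cutoffs, in the degenerate $\vecgamma$), the total number of coordinate/factor incidences is $O(m)$, so by the arithmetic--geometric mean inequality the product over coordinates of the incidence counts is $\le C^m$; this is precisely why the change of variables of Lemma~\ref{prop2lem1} was performed, as otherwise one faces the factorial $(\kappa_i!)^d$.

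The main difficulty is to run this bookkeeping all at once: routing the $(m+1-p)d$ first-order derivatives so that (i) the Leibniz expansion stays $\le C^m$, (ii) those hitting the quadratic phases $\xi'_{il}$ are removed by integration by parts at no cost, (iii) those hitting the cutoffs $I-\chi$ are harmless — the derivative produces $\alpha r^{-d}\chi'$, but $\chi'$ confines the offending variable to a set of volume $O((r^d\alpha^{-1})^d)$, so such terms carry a positive power of $r$ — and (iv) the powers of $\alpha^{-1}$ and $\langle t\rangle$ add up exactly to $d(m-1+p)$ and $(d+1)p-1$. The simultaneous control of (i)--(iv) is what dictates the choice of coordinates in Lemma~\ref{prop2lem1}, the Schwartz norm $\|W\|_{2d}$, and the mixed norm $\|a\|_d^*$.
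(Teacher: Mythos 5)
Your overall strategy is the paper's: reduce the decay of $\widehat J^r$ to $L^1$ bounds on the mixed first-order $\vecY_\scrS$-derivatives of $J^r$, control the Leibniz expansion by $C^m$ using the block structure produced by the change of variables of Lemma~\ref{prop2lem1}, telescope the $\widehat W$-chains, and extract $\|b\|_{L^1}$, $\|a\|_d^*$, $\|\chi\|_{L^1}^{p-2}$, $t^{p-1}/(p-1)!$ and the powers of $\alpha^{-d}$ essentially as you describe. Your point (iii) on derivatives hitting $I-\chi$ (the $\alpha r^{-d}\chi'$ factor being compensated by the $O((r^d/\alpha)^d)$ volume of the support of $\chi'$) is correct and in fact more explicit than the paper.

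The gap is in your treatment of derivatives landing on the phases $\e(\xi'_{il}u_{il})$. You propose to remove them by ``integrating by parts in $\vecy_{ij}$ inside the $(\vecy,\vecH)$-integral.'' But $\vecy_{ij}$ is a component of $\vecY_\scrS$: it is the Fourier-dual variable over which the outer $L^1$ norm is taken, not an integration variable of the inner integral in \eqref{prop2lem1Jdef} (those are $\vecy$, $\vecH$, $\vecnu$, $\vecu$ only). There is therefore no integration by parts available to transfer the derivative onto the $\widehat W$-factors, and the quadratic (``Gaussian-type'') structure of $\xi'_{il}$ in $\vecy_{ij}$ is beside the point. Consequently the bookkeeping in your step (iv) --- that these derivatives generate no powers of $\alpha^{-1}$ or $\langle t\rangle$ --- is unsupported: each such derivative produces a factor $u_{il}\cdot\bigl(y_{i\mu_{il},k}+O(r^d)\bigr)$, and $u_{il}$ is \emph{not} bounded by $t$ (on the simplex one only has $u_{il}\le r^{-d}t$), so it must be absorbed somewhere. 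The paper's direct route is: absorb the linear-in-$\vecy$ factor into the weighted Schwartz norms $\|W\|_{2d}$ (this is precisely why that norm carries the polynomial weights $\vecz^{\vecbeta_1}$); note that each factor $\e(\xi'_{il}u_{il})$ depends on only $2d$ coordinates of $\vecY_\scrS$, so it collects at most $\langle u_{il}\rangle^{2d}$; and pay for these powers of $u_{il}$ with the damping $\e^{-\alpha^d u_{il}}$, i.e.\ with further powers of $\alpha^{-1}$. This is harmless because $\alpha$ is held fixed while $r\to0$, and only the $C^m$ structure and the $r$-uniformity of the constant are used in Theorem~\ref{prop:convergence2}. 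If you replace your integration-by-parts step with this direct estimate, your argument closes and coincides with the paper's.
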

\begin{proof}
We first prove that $J$ is in $L^1(\RR^{d(m+1-p)})$ and hence that the Fourier transform is well defined. Taking absolute values inside the integral yields
\begin{equation} \label{simpleL11}
\begin{split}
\| J_{\vecq_{1\mu_1} \cdots \vecq_{(p-1)\mu_{p-1}} }^r \|_{L^1}  &\leq \int_{\RR^{(m+2)d}}\d\vecy  \d \vecH \, \d \vecY_{\scrS}\, \left| \tilde a(- \sum_{i=1}^{p} \veceta_i,  \vecy_{(p-1)\sigma_{p-1}}  - \tfrac12 r^d \sum_{i=1}^p (\gamma_{i0}-\bar \gamma_{i0})\veceta_i  ) \, \tilde b (\veceta_1,\vecy+\tfrac12r^d\veceta_1) \right| \\
& \times \left| \prod_{i=1}^{p-1} \bigg( [\prod_{j=1}^{\tau_i} \widehat W(\vecy_{j-1} - \vecy_j)] \widehat W( \vecy_{i \tau_i} - \vecy_{i \mu_i} + r^d \veceta_{i+1}) [\prod_{j=\mu_i}^{\kappa_i} \widehat W(\vecy_{i j} - \vecy_{i(j+1)})] \bigg) \right| \\
&\times  \left[\prod_{i=2}^{p-1}\chi(\alpha\sum_{k=1}^i \veceta_k) \right] \int_{\RR_+^{m+1}} \delta(\nu_1+\cdots+\nu_{p} - t)  \d \nu_1 \cdots \d \nu_p \left[ \prod_{i \notin \scrS} \e^{- \alpha^d u_{i}} \, \d u_{i} \right].
\end{split}
\end{equation}
Integrating over $\vecnu$ and $\vecu$ yields
\begin{equation} \label{simpleL12}
\begin{split}
\| J_{\vecq_{1\mu_1} \cdots \vecq_{(p-1)\mu_{p-1}} }^r \|_{L^1}  &\leq \frac{t^{p-1}}{(p-1)!} \, \frac{1}{\alpha^{d (m+1-p)} }\int_{\RR^{(m+2)d}} \d \vecy \d \vecH \, \d \vecY_{\scrS} \, \left|\tilde b (\veceta_1,\vecy+\tfrac12r^d\veceta_1) \right|\\
&\times \left| \tilde a(- \sum_{i=1}^{p} \veceta_i,  \vecy_{(p-1)\sigma_{p-1}}  - \tfrac12 r^d \sum_{i=1}^p (\gamma_{i0}-\bar \gamma_{i0})\veceta_i  ) \right| \left[\prod_{i=2}^{p-1}\chi(\alpha\sum_{k=1}^i \veceta_k) \right]\\
& \times \left| \prod_{i=1}^{p-1} \bigg( [\prod_{j=1}^{\tau_i} \widehat W(\vecy_{j-1} - \vecy_j)] \widehat W( \vecy_{i \tau_i} - \vecy_{i \mu_i} + r^d \veceta_{i+1}) [\prod_{j=\mu_i}^{\kappa_i} \widehat W(\vecy_{i j} - \vecy_{i(j+1)})] \bigg) \right|.
\end{split}
\end{equation}
The $i^{th}$ block of $\widehat W$ factors has the form
\begin{equation}
\widehat W(\vecy_{(i-1) \sigma_{i-1}}-\vecy_{i1}) \widehat W(\vecy_{i1}-\vecy_{i2}) \cdots \widehat W(\vecy_{i\tau_i}-\vecy_{i\mu_i} + r^d \veceta_{i+1} ) \cdots \widehat W( \vecy_{i(\kappa_{i-1}-1)} - \vecy_{i \kappa_i} ) \widehat W (\vecy_{i\kappa_i}-\vecy_{(i-1) \sigma_{i-1}}).
\end{equation}
By a series of substitutions this can be written
\begin{equation}
\widehat W(\vecy_{i1}) \cdots \widehat W(\vecy_{i\kappa_i}) \widehat W(r^d \veceta_{i+1} -\vecy_{i1}-\cdots-\vecy_{i\kappa_i}). 
\end{equation}
Hence, after applying Cauchy-Schwarz to the $\vecy_{(p-1) \sigma_{p-1} }$ and $\veceta_1$ integrals we obtain
\begin{equation} \label{simpleL13}
\begin{split}
\| J_{\vecq_{1\mu_1} \cdots \vecq_{(p-1)\mu_{p-1}} }^r \|_{L^1}  &\leq \frac{t^{p-1}}{(p-1)!} \, \frac{1}{\alpha^{d (m-1)} } \| \widehat W \|_{L^\infty}^{p-1}\,\|\widehat W\|_{L^2} \, \|\widehat W \|_{L^1}^{m-p} \, \|\chi\|_{L^1}^{p-2} \, \| \, a \, \|^* \, \| \, b \, \|_{L^1}
\end{split}
\end{equation}
where
\begin{equation}
\| \, a \, \|^* = \int_{\RR^d} \left( \int_{\RR^d} |\tilde a(\veceta, \vecy) |^2 \d \vecy \right)^{1/2} \d \veceta.
\end{equation}
Next we prove that differentiating once with respect to each component of each $\vecy_{ij}$ variable yields a function which is also in $L^1$, and hence we can conclude that not only does the Fourier transform exist, it decays at least linearly in each coordinate direction.

The first step is to bound the number of terms we obtain when applying this partial derivative. The function $\tilde a$ depends only on $\vecy_{(p-1) \sigma_{p-1}}$ which appears once. The product of $\widehat W$ depends on all $\vecy_{ij}$ variables, with each one appearing either twice, if $j \neq \sigma_i$, or four times if $j=\sigma_i$. The number of terms this generates is thus bounded above by $4^{(m+1-p)d}$. The product of $(I-\chi)$ factors is more subtle. Each factor has the form
\begin{equation}
(I-\chi)\left( \alpha ( \sum_{k=1}^i + r^{-d}(\vecy_{i\mu_{ij}}-\vecy_{i\tau_{ij}}) )\right),
\end{equation}
i.e. it is a function of two $\vecy_{ij}$ variables. In passing from one factor to the next, when $\gamma_{ij}=1$ we increase the index of the second variable by one, and when $\gamma_{ij} = 0$ we decrease the index of the first variable by one. If the block consists of alternating sequences of ones and zeroes of lengths $\ell_1,\dots, \ell_n$ with $\ell_1+\cdots+\ell_n = \kappa_i$ and $n \leq \kappa_i$ then we have $n-1$ variables which appear $\ell_2+1, \dots, \ell_n+1$ times respectively, and the remaining variables appear only once. For $n\geq 2$ this yields $ ((\ell_2+1) \cdots (\ell_n+1))^d$ terms which is bounded above by $(1+ \tfrac{\kappa_i}{n})^{nd}$. This is increasing, and hence the maximum number of terms from each block is bounded above by $2^{\kappa_i d}$, and from the entire product is $2^{(m+1-p)d}$. The product of $\e(\xi_{ij}' u_{ij})$ is similar. Finally, each $\zeta_i$ depends only on $\vecy_{(i-1)\sigma_{i-1}}$. In total then, there exists a constant $C_1$ such that the number of terms is bounded above by $C_1^m$. Each time a derivative is applied to the factor $\e(\zeta_i' \nu_i)$ we obtain a multiplying factor of $\nu_i (\sum_{k=1}^i \veceta_k)$. By the compact support of $\chi$ (and the rapid decay of $\tilde a$, $\tilde b$) this is essentially bounded above by $ t \, \alpha^{-1}$. Each time a derivative is applied to the factor $\e(\xi_{ij}' u_{ij})$ we obtain a multiplying factor of $\pm u_{ij}$. There are at most $2d$ derivatives which act on each of these factors so these factors can be uniformly bounded above by e.g. $\prod_{i \notin \scrS} \langle u_i \rangle^{2d}$. Proceeding as before, there thus exists a uniform constant $C_2 >1$ such that
\begin{equation} \label{simpleL14}
\begin{split}
&\| [\prod_{i=1}^{p-1} \prod_{j=1}^{\kappa_i} \prod_{k=1}^d \frac{\partial}{\partial y_{ijk} } ] J_{\vecq_{1\mu_1} \cdots \vecq_{(p-1)\mu_{p-1}} }^r \|_{L^1}  \leq C_2^m \, \frac{\langle t\rangle^{(d+1)p-1}}{(p-1)!} \, \frac{1}{\alpha^{d (m-1+p)} } \| W \|_{2d}^m \, \|\chi\|_{L^1}^{p-2} \, \| \, a \, \|_{d}^* \, \| \, b \, \|_{L^1}.
\end{split}
\end{equation}
The result then follows.
\end{proof}

We can now prove Proposition \ref{prop:convergence1}.

\begin{proof}[Proof of Proposition \ref{prop:convergence1}]
By Lemmas \ref{prop2lem1} and \ref{prop2lem2} we have that
\begin{equation} \label{simpleL15}
\begin{split}
\left|\scrA_{\vecgamma,\scrS}^{\alpha,r}(t)\right|  &\leq C_3^m \, \frac{\langle t\rangle^{(d+1)p-1}}{(p-1)!} \, \frac{1}{\alpha^{d (m-1+p)} } \| W \|_{2d}^m \, \|\chi\|_{L^1}^{p-2} \, \| \, a \, \|_{d}^* \, \| \, b \, \|_{L^1} \\
&\times r^{d(d-1)(p-1)}\sum_{\vecQ \in \scrP^m} \lambda(r^{d-1} \vecq_{1} ) \cdots \lambda(r^{d-1} \vecq_{m})  \prod_{i=1}^{p-1} \prod_{j=1}^{\kappa_i} \prod_{k=1}^d \min \{ 1,  M(r^{-1}\vecQ)_{ijk}^{-1} \}.
\end{split}
\end{equation}
We are summing over the nondiagonal terms, so there exists an $i$ and $j$ such that $\vecq_{ij} \neq \vecq_{i \mu_i}$. In particular this implies that at least one of the $M(r^{-1} \vecQ)_{ijk}$ is nonzero. By the compact support of $\lambda$,
\begin{multline}
\sum_{\substack{\vecq\in\scrP\\ \vecq \neq \vecq'}} \lambda(r^{d-1} \vecq) \prod_{j=1}^d \min\{1, r (q_j-q_j')^{-1} \} \\
 \leq \| \lambda\|_{L^\infty}  \, \sum_{\substack{\veci \in \ZZ_{\geq 0}^d \setminus \{ 0\} \\ \| \veci \|_1 <  \log_2 (1+ r^{1-d} b_\scrP^{-1} ) }} \sum_{ \vecq\in \scrP} \prod_{j=1}^d \bm 1 \left[ 2^{i_j}-1 < \frac{|q_j-q_j'|}{ b_\scrP} < 2^{i_j+1}-1\right]   \min\{1, r (q_j-q_j')^{-1} \}.
\end{multline}
The number of points in a region of volume $V$ is bounded above by $V b_{\scrP}^{-d}$ so we conclude
\begin{equation}
\sum_{\substack{\vecq\in\scrP\\ \vecq \neq \vecq'}} \lambda(r^{d-1} \vecq) \prod_{j=1}^d \min\{1, r (q_j-q_j')^{-1} \}  \leq 2^d \, \| \lambda\|_{L^\infty} \, \sum_{\substack{\veci \in \ZZ_{\geq 0}^d \setminus \{ 0\} \\ \| \veci \|_1 <  \log_2 (1+ r^{1-d} b_\scrP^{-1} ) }}\prod_{j=1}^d\, 2^{i_j} \min\{1,\frac{r}{2^{i_j}-1}\}.
\end{equation}
In fact this can be written more simply: for $r<2$, 
\begin{equation}
2^{i} \min \{1, \frac{r}{2^{i}-1}\} = \begin{cases} 1 & i = 0 \\ r \frac{2^i}{2^i-1} & |i| > 0 \end{cases}
\end{equation}
We partition the sum into $2^d$ regions according to whether $i_j$ is zero or nonzero. The region which gives the largest contribution to the sum as $r \to 0$ is the one where all but one $i_j$ are zero. Using this upper bound we obtain
\begin{equation}
\sum_{\substack{\vecq\in\scrP\\ \vecq \neq \vecq'}} \lambda(r^{d-1} \vecq) \prod_{j=1}^d \min\{1, r (q_j-q_j')^{-1} \}  \leq 2^{2d+1} \| \lambda\|_{L^\infty} \, \, r \log_2 (1+ r^{1-d} b_\scrP^{-1} ).
\end{equation}
Hence, we may write
\begin{equation} \label{simpleL16}
\begin{split}
\left| \scrA_{\nd,\vecgamma,\scrS}^{\alpha,r}(t) \right|  &\leq 2 r  \log_2 (1+ r^{1-d} b_\scrP^{-1} ) \, \frac{\langle t \rangle^{(d+1)p-1}}{(p-1)!} \, \frac{ C_J^m}{\alpha^{d (m-1+p)} } \| W \|_{2d}^m \, \|\chi\|_{L^1}^{p-2} \, \| \, a \, \|_{d}^* \, \| \, b \, \|_{L^1} \, \\
&\times (4^d  \|\lambda\|_{L^\infty})^{m+1-p}\,  r^{d(d-1)(p-1)} \sum_{\vecq_{1\mu_1},\dots,\vecq_{(p-1)\mu_{p-1}} \in \scrP } \lambda(r^{d-1} \vecq_{1\mu_1} ) \cdots \lambda(r^{d-1} \vecq_{(p-1)\mu_{p-1}})  
\end{split}
\end{equation}
and the result follows from our assumption \eqref{assumption1}.
\end{proof}

\begin{thm}[Sum of nondiagonal terms vanishes] \label{prop:convergence2} There exists a constant $\lambda_0>0$ depending on $\alpha, t, W, a$ and $b$ such that for all $\lambda$ with $\|\lambda\|_{1,\infty}<\lambda_0$
\begin{equation} \label{prop:convergence2:equation}
\sum_{m=1}^\infty (2\pi\i)^m \sum_{\vecgamma \in\{0,1\}^m} (-1)^{\gamma_1+\cdots+\gamma_m}  \sum_{\scrS\in \Pi_m}\scrA_{\nd,\vecgamma,\scrS}^{\alpha,r}(t) = O\left(r  \log_2 (1+ r^{1-d} b_\scrP^{-1} )\right).
\end{equation}
\end{thm}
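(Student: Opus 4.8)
The plan is to reduce the claim to the pointwise bound already proved in Proposition~\ref{prop:convergence1} and to absorb the remaining combinatorics into a convergent geometric series. Nothing analytically new is needed at this stage; all the real work lives in Proposition~\ref{prop:convergence1} and the two lemmas preceding it.

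First I would count the two index sets appearing in \eqref{prop:convergence2:equation}. The sum over $\vecgamma\in\{0,1\}^m$ has exactly $2^m$ terms. An element $\scrS\in\Pi_m$ is a subset of $\{0,\dots,m\}$ containing both $0$ and $m$, hence is determined by an arbitrary choice of the intermediate indices from $\{1,\dots,m-1\}$, so $|\Pi_m|=2^{m-1}$. Feeding the bound $|\scrA_{\nd,\vecgamma,\scrS}^{\alpha,r}(t)|\le C^m\,r\,\log_2(1+r^{1-d}b_\scrP^{-1})\,\|\lambda\|_{1,\infty}^m$ from Proposition~\ref{prop:convergence1} (with $C$ depending only on $\alpha,t,W,a,b$, and in particular independent of $m$) into \eqref{prop:convergence2:equation}, the triangle inequality gives, for each $m\ge 1$,
\[
\Bigl|(2\pi\i)^m\!\!\sum_{\vecgamma\in\{0,1\}^m}\!(-1)^{\gamma_1+\cdots+\gamma_m}\!\!\sum_{\scrS\in\Pi_m}\!\scrA_{\nd,\vecgamma,\scrS}^{\alpha,r}(t)\Bigr|
\le \tfrac12\,(8\pi C)^m\,r\,\log_2(1+r^{1-d}b_\scrP^{-1})\,\|\lambda\|_{1,\infty}^m .
\]

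Next I would set $\lambda_0:=(8\pi C)^{-1}$, which depends on $\alpha,t,W,a,b$ but not on $m$ or $r$. For $\|\lambda\|_{1,\infty}<\lambda_0$ the right-hand side above is the general term of a convergent geometric series, so the series in \eqref{prop:convergence2:equation} converges absolutely and is bounded in modulus by
\[
\frac{r\,\log_2(1+r^{1-d}b_\scrP^{-1})}{2}\,\frac{8\pi C\|\lambda\|_{1,\infty}}{1-8\pi C\|\lambda\|_{1,\infty}}
= O\!\bigl(r\,\log_2(1+r^{1-d}b_\scrP^{-1})\bigr),
\]
with implied constant depending on $\alpha,t,W,a,b,\lambda$ but not on $r$, which is exactly the assertion.

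I do not expect a genuine obstacle. The only things requiring care are bookkeeping: (i) the threshold $\lambda_0$ must be $m$-independent, which holds precisely because the constant $C$ in Proposition~\ref{prop:convergence1} is; and (ii) the implied constant degenerates as $\alpha\to0$, since $C$ contains negative powers of $\alpha$ coming from Lemma~\ref{prop2lem2}, but this is harmless because in Theorem~\ref{theorem} one sends $r\to0$ first with $\alpha$ held fixed, and $r\,\log_2(1+r^{1-d}b_\scrP^{-1})=O(r\log(1/r))\to0$ as $r\to0$ for every fixed $\alpha$.
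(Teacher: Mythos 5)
Your proposal is correct and follows exactly the paper's own argument: apply the pointwise bound of Proposition~\ref{prop:convergence1}, use $|\{0,1\}^m|=2^m$ and $|\Pi_m|=2^{m-1}$ to reduce the sum to the geometric series $\sum_m(8\pi C\|\lambda\|_{1,\infty})^m$, and take $\lambda_0=(8\pi C)^{-1}$. No differences worth noting.
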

\begin{proof}
Begin from the result of Proposition \ref{prop:convergence1}. Using the fact that $|\Pi_m| = 2^{m-1}$ we see that the left hand side of \eqref{prop:convergence2:equation} is bounded above by
\begin{equation}
r  \log_2 (1+ r^{1-d} b_\scrP^{-1} ) \sum_{m=0}^\infty (8  \pi C \|\lambda\|_{1,\infty})^m \, 
\end{equation}
which converges for $\|\lambda\|_{1,\infty} < (8 \pi C)^{-1}$.

\end{proof}

\subsection{Diagonal Terms}
\begin{prop}[Convergence of diagonal terms] \label{prop:convergence3}
\begin{equation} \label{simple5}
\begin{split}
\lim_{r\to 0} \scrA_{\d,\vecgamma,\scrS}^{\alpha,r}(t) &= \int_{\RR^{(m+2)d}}\d\vecy \d \vecH \d\vecZ_\scrS  \, \widehat \lambda_{\kappa_1} (\veceta_2) \cdots \widehat \lambda_{\kappa_{p-1}}(\veceta_p) \, F_{\gamma,\scrS}(\vecZ_\scrS,\vecH)   \\
&\times \Bigg[  \prod_{i\notin\scrS} \int_{\RR_+} \e( \xi_{i}^0 u ) \e^{- \alpha^d u} \, \d u \Bigg] \int_{\RR_+^{p}}   \d \vecnu \, \delta(\nu_1+\cdots+\nu_{p} - t) \\
&\times \left[\prod_{i=1}^{p}  \e( (\vecy -  \sum_{k=1}^{i-1} \sum_{\ell=1}^{\kappa_k} (\gamma_{k\ell} - \gamma_{(k+1)0} ) \vecz_{k\ell} )\cdot (\sum_{k=1}^i \veceta_k) \, \nu_{i}) \, \bm 1 [ \Gamma(\alpha \sum_{k=1}^i \veceta_k)=1 ]\right].
\end{split}
\end{equation}
where $ \widehat \lambda_\kappa (\veceta) = \int_{\RR^d} [ \lambda(\vecx)]^\kappa \, \e(-\vecx\cdot\veceta) \, \d \vecx$,
\begin{equation}\begin{split} \label{F0def}
F_{\gamma,\scrS}(\vecZ_\scrS,\vecH) &= \left[\prod_{i=1}^{p-1} \widehat W(- \sum_{j=1}^{\kappa_i} \vecz_{ij} )\prod_{j=1}^{\kappa_i} \widehat W(\vecz_{ij})\right] \\
& \times \tilde a(- \sum_{i=1}^{p} \veceta_i,  \vecy - \sum_{i=1}^{p-1} \sum_{j=1}^{\kappa_i} (\gamma_{ij}-\gamma_{(i+1)0})  \vecz_{ij} ) \, \tilde b (\veceta_1,\vecy),
\end{split}\end{equation}
and for $i=1,\dots,p-1$ and $j=1,\dots,\kappa_i$
\begin{multline}
\xi_{ij}^0 = \tfrac12(\|\vecy +  \sum_{k=1}^{i-1}\sum_{\ell=1}^{\kappa_i} (\bar\gamma_{k\ell} - \bar\gamma_{(k+1)0}) \vecz_{k\ell} + \sum_{\ell=1}^j \bar\gamma_{i\ell} \vecz_{i\ell}\|^2 \\
- \|\vecy -  \sum_{k=1}^{i-1}\sum_{\ell=1}^{\kappa_i} (\gamma_{k\ell} - \gamma_{(k+1)0}) \vecz_{k\ell}- \sum_{\ell=1}^j \gamma_{i\ell} \vecz_{i \ell}\|^2 ) .
\end{multline}
\end{prop}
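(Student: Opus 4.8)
The plan is to perform the limit $r\to 0$ by dominated convergence, once the scatterer sums have been collapsed onto the ``node'' momenta via the Riemann-sum hypothesis \eqref{assumption1}. First note that on the diagonal $\vecq_{ij}=\vecq_{(i+1)0}$ the oscillatory factor $\e(-r^{-1}\sum_{j=1}^{\kappa_i}(\vecq_{ij}-\vecq_{(i+1)0})\cdot\vecz_{ij})$ in \eqref{simple3} is identically $1$, so that the whole $\vecq$-dependence of $\scrA_{\d,\vecgamma,\scrS}^{\alpha,r}(t)$ reduces to $\prod_{i=1}^{p-1}[\lambda(r^{d-1}\vecq_{(i+1)0})]^{\kappa_i+1}\,\e(-r^{d-1}\vecq_{(i+1)0}\cdot\veceta_{i+1})$, summed over the $p-1$ free node momenta $\vecq_{(2)0},\dots,\vecq_{(p)0}\in\scrP$. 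Because every $u_i$-integral with $i\notin\scrS$ carries the factor $\e^{-\alpha^d u_i}$, all the other exponentials have modulus at most $1$, the functions $\widehat W,\tilde a,\tilde b$ decay rapidly, and (by uniform discreteness of $\scrP$ together with the compact support of $\lambda$) the quantity $r^{d(d-1)}\sum_{\vecq\in\scrP}|\lambda(r^{d-1}\vecq)|^{\kappa_i+1}$ is bounded uniformly in $r$, the whole expression is absolutely summable and integrable; hence, by Fubini, one may bring the node sums inside the $\vecy,\vecH,\vecZ_\scrS,\vecnu,\vecu$-integrals, so that the integrand of $\scrA_{\d,\vecgamma,\scrS}^{\alpha,r}(t)$ now contains the factor $\prod_{i=1}^{p-1}\left(r^{d(d-1)}\sum_{\vecq\in\scrP}[\lambda(r^{d-1}\vecq)]^{\kappa_i+1}\e(-r^{d-1}\vecq\cdot\veceta_{i+1})\right)$.

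The next step is to apply \eqref{assumption1} with $\epsilon=r^{d-1}$ and $g(\vecx)=[\lambda(\vecx)]^{\kappa_i+1}\e(-\vecx\cdot\veceta_{i+1})$: each of these factors converges pointwise, as $r\to0$, to $\int_{\RR^d}[\lambda(\vecx)]^{\kappa_i+1}\e(-\vecx\cdot\veceta_{i+1})\,\d\vecx$ --- a Fourier transform of a power of $\lambda$, namely the $\widehat\lambda$-factor of the statement (the power being the multiplicity $\kappa_i+1$ with which the $i$th node momentum occurs on the diagonal) --- with error $O(r^{(d-1)c_\scrP}(1+\|\veceta_{i+1}\|))$ and constant depending only on $\scrP,\lambda,\kappa_i$; in particular these factors are uniformly bounded in $r$ and in $\veceta_{i+1}$. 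One then passes to the limit in the remaining $r$-dependent pieces of the integrand: $F_{\vecgamma,\scrS}^r\to F_{\gamma,\scrS}$, the phases $\zeta_i$ and $\xi_{ij}$ converge to the $r$-free exponents displayed in the statement, and $\delta(\nu_1+\cdots+\nu_p+r^d\sum_{i\notin\scrS}u_i-t)\to\delta(\nu_1+\cdots+\nu_p-t)$, all just by discarding the $O(r^d)$-corrections; each factor $(I-\chi)(\alpha(\sum_{k=1}^i\veceta_k+r^{-d}\sum_{\ell=1}^j\vecz_{i\ell}))\to1$ for a.e.\ configuration of the $\vecz$-variables, since $r^{-d}\sum_{\ell=1}^j\vecz_{i\ell}\to\infty$ off the null set $\{\sum_{\ell=1}^j\vecz_{i\ell}=0\}$ while $I-\chi$ equals $1$ outside the ball of radius $2$; and each factor $\e^{-\alpha^d(1-\Gamma(\alpha\sum_{k=1}^i\veceta_k))r^{-d}\nu_i}\to\bm 1[\Gamma(\alpha\sum_{k=1}^i\veceta_k)=1]$ for $\nu_i>0$. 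Finally, since $\Gamma$ is continuous, vanishes for $\|\vecy\|>1$, and equals $1$ near the origin, one has $\{\Gamma=1\}\subset\{\|\cdot\|<1\}\subseteq\{\chi=1\}$, so the surviving factors $\chi(\alpha\sum_{k=1}^i\veceta_k)$, $i=2,\dots,p-1$, are absorbed into the corresponding indicators $\bm 1[\Gamma(\alpha\sum_{k=1}^i\veceta_k)=1]$. Collecting these pointwise limits --- the $u_i$-integrals then decoupling into the stated $\int_{\RR_+}\e(\xi_{ij}^0u)\e^{-\alpha^d u}\,\d u$ --- reproduces the right-hand side of the proposition.

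What remains is to exhibit, for $r$ small and $\alpha>0$ fixed, an $r$-uniform integrable majorant so that dominated convergence applies. The node-sum factors are uniformly bounded as above; $\e(\cdot)$, $\e^{-\alpha^d(1-\Gamma)r^{-d}\nu_i}$ and $I-\chi$ are bounded by $1$; $\prod_{j=1}^{\kappa_i}|\widehat W(\vecz_{ij})|$ gives Schwartz decay in each $\vecz_{ij}$ while $|\widehat W(r^d\veceta_{i+1}-\sum_{j}\vecz_{ij})|\le\|\widehat W\|_{L^\infty}$; $|\tilde b(\veceta_1,\vecy+\tfrac12 r^d\veceta_1)|$ controls $\vecy$ and $\veceta_1$, $|\tilde a(-\sum_i\veceta_i,\cdot)|$ controls $\sum_i\veceta_i$, and for $p\ge3$ the factors $\chi(\alpha\sum_{k=1}^i\veceta_k)$, $i=2,\dots,p-1$, confine each partial sum $\sum_{k=1}^i\veceta_k$ to $\{\|\cdot\|\le2/\alpha\}$, so that, taking successive differences, every $\veceta_i$ lies in a bounded set up to the rapidly decaying tails furnished by $\tilde a$ and $\tilde b$ (for $p=2$ only $\tilde a,\tilde b$ are needed); finally $\prod_{i\notin\scrS}\e^{-\alpha^d u_i}$ makes the $\vecu$-integral finite, while $\vecnu$ runs over the bounded simplex $\{\nu_i\ge0,\ \sum_i\nu_i\le t\}$. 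The product of these bounds is the desired majorant. I expect the bookkeeping of this majorant --- in particular verifying that all of $\veceta_1,\dots,\veceta_p$ are controlled (with the split into the cases $p=2$ and $p\ge3$) and checking that the $r^{-d}$-rescaled exponentials can only drive a factor to $0$ or to $1$ and never blow up --- to be the one step requiring genuine care; everything else is a routine passage to the limit.
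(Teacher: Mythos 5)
Your argument is essentially the paper's own proof: restrict to the diagonal so the rapid $\vecz$-oscillations vanish, collapse the node sums via assumption \eqref{assumption1} into $\widehat\lambda_\kappa(\veceta_{i+1})+O(r^{(d-1)c_\scrP}\|\veceta_{i+1}\|)$, and pass to the limit by dominated convergence, using the compact support of $\chi$ and the rapid decay of $\tilde a$, $\tilde b$, $\widehat W$ for the majorant together with the pointwise limits $\e^{-c(1-\Gamma(\vecz))r^{-d}}\to\bm 1[\Gamma(\vecz)=1]$ and $\chi(\alpha(\cdot+r^{-d}\vecz))\to 0$ for $\vecz\neq\vecnull$. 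The only discrepancy is bookkeeping: you assign the $i$th node multiplicity $\kappa_i+1$ (which is what a direct count of the $m$ factors of $\lambda$ on the diagonal gives), so your limit carries the Fourier transform of $\lambda^{\kappa_i+1}$ where the stated formula has $\widehat\lambda_{\kappa_i}$ --- a point worth reconciling with the statement, but one that does not affect the structure of the argument.
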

\begin{proof}
From \eqref{simple3} and the definition of the diagonal terms we have
\begin{equation}
\begin{split}
\scrA_{\d,\vecgamma,\scrS}^{\alpha,r}(t)  &= r^{d(d-1)(p-1)}\sum_{\vecq_{s_2}, \cdots,\vecq_{s_p} \in \scrP} \lambda(r^{d-1} \vecq_{s_2} )^{\kappa_1} \cdots \lambda(r^{d-1} \vecq_{s_p})^{\kappa_{p-1}} \int_{\RR^{(m+2)d}}\d\vecy \d \vecH \d\vecZ_\scrS\,  \\
&\times F_{\gamma,\scrS}^r(\vecZ_\scrS,\vecH,\vecy) \left[ \prod_{i=1}^{p-1} \prod_{j=1}^{\kappa_i} (I-\chi)(\alpha(\sum_{k=1}^i \veceta_k + r^{-d} \sum_{\ell=1}^j \vecz_{i\ell})) \right] \\
&\times  \left[ \prod_{i=1}^{p-1} \e\left(-r^{d-1} \vecq_{(i+1)0} \cdot\veceta_{i+1}  \right) \right] \\
&\times\left[\prod_{i=2}^{p-1}\chi(\alpha\sum_{k=1}^i \veceta_k) \right] \int_{\RR_+^{m+1}} \delta(\nu_1+\cdots+\nu_{p} + r^d \sum_{i \notin \scrS} u_i - t)  \\
&\times\left[\prod_{i=1}^{p}  \e(\zeta_i \nu_{s_i})  \, \e^{-\alpha^d (1-\Gamma(\alpha \sum_{k=1}^i \veceta_k))\, r^{-d} \nu_i }\d \nu_i\right]  \left[ \prod_{i \notin \scrS}\e( \xi_{i}  u_{i} ) \e^{- \alpha^d u_{i}} \, \d u_{i} \right]
\end{split}
\end{equation}
By assumption \eqref{assumption1}, we have that
\begin{equation}
r^{d(d-1)}\sum_{\vecq \in \scrP} [\lambda(r^{d-1} \vecq )]^\kappa \e \left( - r^{d-1} \vecq \cdot \veceta \right) =  \widehat \lambda_{\kappa}(\veceta) + O(r^{(d-1) c_\scrP} \|\veceta\| ).
\end{equation}
We thus obtain the upper bound
\begin{equation}
\begin{split}
\left| \scrA_{\d,\vecgamma,\scrS}^{\alpha,r}(t) \right|  &\leq \frac{t^{p-1}}{(p-1)!} \, \frac{1}{\alpha^{d(m+1-p)}} \int_{\RR^{(m+2)d}}\d\vecy \d \vecH \d\vecZ_\scrS\, \left| F_{\gamma,\scrS}^r(\vecZ_\scrS,\vecH,\vecy) \right|\\ 
&\times\left[ \prod_{i=2}^p \left( \widehat\lambda_{\kappa_{i-1}}(\veceta_{i} ) + O( r^{(d-1) c_\scrP} \|\veceta_i\| ) \right) \chi(\alpha \sum_{k=1}^i \veceta_k) \right] .
\end{split}
\end{equation}
Since $\chi$ is compactly supported, this integral converges and we can apply dominated convergence. The result then follows by taking the pointwise limit $r\to 0$, using the fact that for all $c>0$
\begin{equation}
\lim_{r\to 0} \e^{- c (1-\Gamma(\vecz) ) r^{-d} } = \bm 1[ \Gamma(\vecz)=1]
\end{equation}
and that for all $\vecz \neq 0$ we have
\begin{equation}
\lim_{r \to 0} \, \chi(\alpha(\sum_{k=1}^i \veceta_k + r^{-d}\vecz )) = 0.
\end{equation}
\end{proof}
\begin{thm}[Sum of diagonal terms converges] \label{prop:convergence4} For $\alpha, t>0$ fixed, there exists a constant $\lambda_0>0$ such that for all $\lambda$ with $\|\lambda\|_{1,\infty}<\lambda_0$, the series
\begin{equation} \label{prop:convergence4:equation}
\sum_{m=1}^\infty (2\pi\i)^m \sum_{\vecgamma \in\{0,1\}^m} (-1)^{\gamma_1+\cdots+\gamma_m}  \sum_{\scrS\in \Pi_m}\scrA_{\d,\vecgamma,\scrS}^{\alpha,r}(t)
\end{equation}
is absolutely convergent, uniformly as $r \to 0$.
\end{thm}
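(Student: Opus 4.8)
The plan is to dominate $|\scrA_{\d,\vecgamma,\scrS}^{\alpha,r}(t)|$, uniformly for all sufficiently small $r$, by a bound of the form $D^{m}\,\|\lambda\|_{1,\infty}^{m}\,\tfrac{t^{p-1}}{(p-1)!}\,\alpha^{-d(m+1-p)}$, where $p=|\scrS|$ and $D$ depends only on $W$, $a$, $b$, $\alpha$ and the separation constant $b_\scrP$ of $\scrP$ (and not on $m$, $\vecgamma$, $\scrS$, $\lambda$, $r$), and then to sum over $\vecgamma$, $\scrS$ and $m$. The virtue of this shape is that the factor $\tfrac{t^{p-1}}{(p-1)!}$ will absorb every quantity growing with $p$, so that the resulting series factors into an unconditionally convergent exponential series and a geometric series in $\|\lambda\|_{1,\infty}$, forcing only $\|\lambda\|_{1,\infty}$ to be small.

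To obtain this bound I would start from the expression for $\scrA_{\d,\vecgamma,\scrS}^{\alpha,r}(t)$ obtained by restricting \eqref{simple3} to the diagonal $\vecq_{ij}=\vecq_{(i+1)0}$, as in the proof of Proposition~\ref{prop:convergence3}, and estimate it directly. All exponentials $\e(\cdot)$, damping factors $\e^{-\alpha^{d}(1-\Gamma(\cdot))r^{-d}\nu_i}$ and $\e^{-\alpha^{d}u_i}$, and cutoffs $(I-\chi)(\cdot)$ have modulus at most $1$: integrating the $u_i$ with $i\notin\scrS$ against $\e^{-\alpha^{d}u_i}$ produces $\alpha^{-d(m+1-p)}$, and integrating the $\nu_i$ over $\{\nu\in\RR_+^{p}:\sum_i\nu_i\le t\}$ produces $\le t^{p-1}/(p-1)!$. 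On the diagonal the scatterer sum factorises into $p-1$ sums $r^{d(d-1)}\sum_{\vecq\in\scrP}|\lambda(r^{d-1}\vecq)|^{\kappa_i+1}$, and since $\supp\lambda\subset\scrB_{1}$ and $\scrP$ is $b_\scrP$-separated, only $O(b_\scrP^{-d}r^{-d(d-1)})$ terms contribute, so each such sum is $\le C_\scrP\|\lambda\|_{1,\infty}^{\kappa_i+1}$ uniformly in small $r$ and, over the $p-1$ blocks, $\le C_\scrP^{p-1}\|\lambda\|_{1,\infty}^{m}$ (using this crude count rather than the sharp asymptotics \eqref{assumption1} is precisely what keeps the bound free of derivative norms of $\lambda$). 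Finally the remaining integral $\int\d\vecy\,\d\vecH\,\d\vecZ_\scrS\,|F^{r}_{\vecgamma,\scrS}|\prod_{i=2}^{p-1}\chi(\alpha\sum_{k\le i}\veceta_k)$ is estimated by Cauchy--Schwarz in $\vecy$ on the pair $\tilde a,\tilde b$, by integrating the $\vecz_{ij}$ against the $\widehat W$-factors (bounding the coupling factor $\widehat W(r^{d}\veceta_{i+1}-\sum_j\vecz_{ij})$ by $\|\widehat W\|_{L^{\infty}}$ and the others by $\|\widehat W\|_{L^{1}}$), and by the triangular substitution $\vecw_i=\sum_{k\le i}\veceta_k$, under which $\vecw_2,\dots,\vecw_{p-1}$ are confined to $\scrB_{2/\alpha}$ by the surviving $\chi$'s while the $\|\cdot\|^{*}$-norms of $a$ and $b$ absorb $\vecw_{p}$ and $\veceta_1=\vecw_1$; this contributes $\le\|\widehat W\|_{L^{\infty}}^{p-1}\|\widehat W\|_{L^{1}}^{m-p+1}(\alpha^{-d}\vol\scrB_{2})^{p-2}\|a\|^{*}\|b\|^{*}$. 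As $p\le m+1$, each factor here is of the form $(\text{const})^{p}$ or $(\text{const})^{m}$, and grouping them yields the claimed bound with a single $D^{m}$.

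For the summation there are $2^{m}$ choices of $\vecgamma\in\{0,1\}^{m}$, and exactly $\binom{m-1}{p-2}$ sets $\scrS\in\Pi_m$ with $|\scrS|=p$. Writing $q=p-1\ge 1$ and $k=m+1-p\ge 0$, so $m=q+k$ and $\binom{m-1}{p-2}=\binom{q+k-1}{k}\le 2^{q+k}$, and collecting the $(2\pi)^{m}$ from the series together with these combinatorial factors, the sum over all diagonal terms of the bound above is at most
\[
\sum_{q\ge 1}\sum_{k\ge 0}(8\pi D\|\lambda\|_{1,\infty})^{q+k}\,\frac{t^{q}}{q!}\,\alpha^{-dk}
=\Bigl(\sum_{q\ge 1}\frac{(8\pi D\|\lambda\|_{1,\infty}t)^{q}}{q!}\Bigr)\Bigl(\sum_{k\ge 0}(8\pi D\alpha^{-d}\|\lambda\|_{1,\infty})^{k}\Bigr).
\]
The first factor is $\le e^{8\pi D\|\lambda\|_{1,\infty}t}$ for every $\|\lambda\|_{1,\infty}$, and the second is a geometric series, convergent exactly when $8\pi D\alpha^{-d}\|\lambda\|_{1,\infty}<1$. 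So, taking $\lambda_{0}:=\alpha^{d}(8\pi D)^{-1}$, the series \eqref{prop:convergence4:equation} converges absolutely, with a bound independent of $r$, whenever $\|\lambda\|_{1,\infty}<\lambda_{0}$.

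The main work is the first step. The delicate points are: (i) estimating the scatterer sums using nothing beyond $\|\lambda\|_{1,\infty}$, which is why the crude lattice count is used in place of \eqref{assumption1}; (ii) controlling all the $\veceta_i$-integrals, which forces the substitution $\vecw_i=\sum_{k\le i}\veceta_k$ and some care with the boundary indices $i=2$, $i=p$ and with empty blocks $\kappa_i=0$; and (iii) checking that every factor growing with $p$ is at most exponential in $p$, so that it is dominated by the $1/(p-1)!$ from the time simplex --- this is exactly what allows $D$ to be arbitrarily large while only $\|\lambda\|_{1,\infty}$ need be small.
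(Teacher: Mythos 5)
Your proposal is correct and follows essentially the same route as the paper: bound each diagonal term by (constant)$^{m}$ times $t^{p-1}/(p-1)!$ and the $\alpha$-factors, control the scatterer sums uniformly in $r$ via the uniform discreteness and compact support of $\lambda$, and then sum a geometric series in $\|\lambda\|_{1,\infty}$ against the $2^{m}\cdot|\Pi_m|\cdot(2\pi)^{m}$ combinatorial factor. Your bookkeeping of the $p$-dependence and the crude point-count in place of the Riemann-sum asymptotic \eqref{assumption1} are only minor (and if anything more explicit) variations on the paper's argument.
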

\begin{proof}
Proceeding as in the Proof of Proposition \ref{prop:convergence3}, we may write
\begin{equation}
\begin{split}
\left| \scrA_{\d,\vecgamma,\scrS}^{\alpha,r}(t) \right|  &\leq \frac{t^{p-1}}{(p-1)!} \, \frac{1}{\alpha^{d(m+1-p)}} \int_{\RR^{(m+2)d}}\d\vecy \d \vecH \d\vecZ_\scrS\, \left| F_{\gamma,\scrS}^r(\vecZ_\scrS,\vecH,\vecy) \right|\\ 
&\times\left[ \prod_{i=2}^p \left( \widehat \lambda_{\kappa_{i-1}}(\veceta_{i} ) + O( r^{(d-1) c_\scrP} \|\veceta_i\| ) \right) \chi(\alpha \sum_{k=1}^i \veceta_k) \right] .
\end{split}
\end{equation}
Since $\chi$ is compactly supported and $\tilde a$ and $\tilde b$ are rapidly decaying, the integral over $\vecH$ converges. By the definition of $F_{\vecgamma,\scrS}^r$ there exists a constant $C>0$ such that $ \left|\scrA_{\d,\vecgamma,\scrS}^{\alpha,r}(t)\right|  < C^{m+1}$. Equation \eqref{prop:convergence4:equation} can thus be bounded above by
\begin{equation}
\sum_{m=0}^\infty (8 \pi \|\lambda\|_{1,\infty} C)^m
\end{equation}
which converges for $\|\lambda\|_{1,\infty} < (8 \pi C)^{-1}$.

\end{proof}

\subsection{The zero-damping limit}

\begin{prop}[Convergence of diagonal terms] \label{prop:convergence5}
\begin{equation}
\lim_{\alpha \to 0}\lim_{r\to 0} \scrA_{\d,\vecgamma,\scrS}^{\alpha,r}(t) = \int_{\RR^{2d}} \d \vecx \d\vecy \, f_{\vecgamma,\scrS}(t,\vecx,\vecy) \, b(\vecx,\vecy)
\end{equation}
where
\begin{equation*} \label{simple8}
\begin{split}
f_{\vecgamma,\scrS}(t,\vecx,\vecy) &= \int_{\RR^{(m+1-p)d}} \d\vecZ_\scrS \int_{\RR_+^{p}}   \d \vecnu \, \delta(\nu_1+\cdots+\nu_{p} - t) \\
&\times \left[ \prod_{i=2}^{p} \left[\lambda \left(\vecx- \sum_{j=1}^{i-1}\bigg( \vecy- \sum_{k=1}^{j-1} \sum_{\ell=1}^{\kappa_k} (\gamma_{k\ell}-\gamma_{(k+1)0}) \vecz_{k\ell} \bigg) \nu_j\right) \right]^{\kappa_{i-1}}\right] \\
&\times  \left[\prod_{i=1}^{p-1} \widehat W(- \sum_{j=1}^{\kappa_i} \vecz_{ij} )\prod_{j=1}^{\kappa_i} \widehat W(\vecz_{ij})\right] \left[  \prod_{i=1}^{p-1}\prod_{j=1}^{\kappa_i} \int_{\RR_+} \e( \xi_{ij}^0 u_{ij} ) \, \d u_{ij} \right]  \\
&\times a(\vecx- t \vecy + \sum_{i=1}^{p} \sum_{k=1}^{i-1} \sum_{\ell=1}^{\kappa_k} (\gamma_{k\ell} - \gamma_{(k+1)0} ) \vecz_{k\ell} \, \nu_{i},  \vecy - \sum_{i=1}^{p-1} \sum_{j=1}^{\kappa_i} (\gamma_{ij}-\gamma_{(i+1)0})  \vecz_{ij} ).
\end{split}
\end{equation*}
\end{prop}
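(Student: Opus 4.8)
The plan is to feed in the closed form for $\lim_{r\to0}\scrA_{\d,\vecgamma,\scrS}^{\alpha,r}(t)$ furnished by Proposition~\ref{prop:convergence3}, to send $\alpha\to0$ inside that integral, and then to evaluate the $\vecH=(\veceta_1,\dots,\veceta_p)$ integral explicitly by Fourier inversion; the Fourier inversion is exactly what produces the factor $b(\vecx,\vecy)$ and leaves $f_{\vecgamma,\scrS}(t,\vecx,\vecy)$ as the surviving integrand.

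First I would let $\alpha\to0$. For fixed $(\vecH,\vecZ_\scrS)$ each indicator $\bm 1[\Gamma(\alpha\sum_{k=1}^i\veceta_k)=1]$ is eventually $1$ (since $\Gamma\equiv1$ near the origin) and $\e^{-\alpha^d u}\to1$, so the integrand converges pointwise to that of the asserted expression, i.e. the same integral with $\prod_{i\notin\scrS}\int_{\RR_+}\e(\xi_i^0 u)\e^{-\alpha^d u}\,\d u$ and all indicators replaced by $\prod_{i\notin\scrS}\int_{\RR_+}\e(\xi_i^0 u)\,\d u$. The genuinely delicate point, which is the main obstacle, is to justify the interchange of $\lim_{\alpha\to0}$ with the $\vecH,\vecZ_\scrS,\vecy$ integrations: $\int_{\RR_+}\e(\xi_{ij}^0 u)\e^{-\alpha^d u}\,\d u=(\alpha^d-2\pi\i\xi_{ij}^0)^{-1}$ is not bounded uniformly in $\alpha$ near the zero set of $\xi_{ij}^0$, so there is no crude dominating function. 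I would handle this as in Castella~\cite{Castella_LD,Castella_LD2}: within each block $i$ one first changes variables in $(\vecz_{i1},\dots,\vecz_{i\kappa_i})$ to coordinates adapted to the phases $\xi_{ij}^0$ — using that, by the definition of $\xi_{ij}^0$ and $\gamma_{ij}\bar\gamma_{ij}=0$, the increment $\xi_{ij}^0-\xi_{i(j-1)}^0$ is, with the earlier variables fixed, a nondegenerate quadratic function of $\vecz_{ij}$ alone — and then integrates by parts in these coordinates via $(\alpha^d-2\pi\i\xi)^{-1}=(2\pi\i)^{-1}\partial_\xi\log(\alpha^d-2\pi\i\xi)$; this replaces each singular factor by $\log(\alpha^d-2\pi\i\xi)$, whose modulus is $\leq|\log(2\pi|\xi|)|+\pi$ for all $\alpha\in(0,1)$, and that, together with the Schwartz decay carried by the $\widehat W(\vecz_{ij})$ and the compact support of the $\widehat\lambda_{\kappa_i}$, is a legitimate dominating function. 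Dominated convergence then gives $(2\pi\i)^{-1}\int g'(\xi)\log(-2\pi\i\xi)\,\d\xi=\int g(\xi)\bigl(\int_{\RR_+}\e(\xi u)\,\d u\bigr)\,\d\xi$, i.e. exactly the distributional $u$-integrals occurring in the statement. Everything after this is bookkeeping.

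Finally I would carry out the $\vecH$-integration. Write $\widehat\lambda_{\kappa_{i-1}}(\veceta_i)=\int_{\RR^d}[\lambda(\vecx_i)]^{\kappa_{i-1}}\e(-\vecx_i\cdot\veceta_i)\,\d\vecx_i$ for $i=2,\dots,p$, and expand $\tilde b(\veceta_1,\vecy)=\int_{\RR^d} b(\vecx_1,\vecy)\e(-\vecx_1\cdot\veceta_1)\,\d\vecx_1$ and $\tilde a\bigl(-\sum_{i=1}^p\veceta_i,\vecw\bigr)=\int_{\RR^d} a(\vecx_0,\vecw)\e\bigl(\vecx_0\cdot\sum_{i=1}^p\veceta_i\bigr)\,\d\vecx_0$, where $\vecw=\vecy-\sum_{i=1}^{p-1}\sum_{j=1}^{\kappa_i}(\gamma_{ij}-\gamma_{(i+1)0})\vecz_{ij}$. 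Since all remaining dependence on $\vecH$ now sits in the linear phases $\prod_{i=1}^p\e\bigl((\vecy-\sum_{k=1}^{i-1}\sum_{\ell=1}^{\kappa_k}(\gamma_{k\ell}-\gamma_{(k+1)0})\vecz_{k\ell})\cdot(\sum_{k=1}^i\veceta_k)\,\nu_i\bigr)$, each $\veceta_j$-integral is the Fourier integral of a pure phase and yields a Dirac mass: for $j=1$ it is $\delta\bigl(\vecx_0-\vecx_1+\sum_{i=1}^p(\vecy-\sum_{k=1}^{i-1}\sum_{\ell=1}^{\kappa_k}(\gamma_{k\ell}-\gamma_{(k+1)0})\vecz_{k\ell})\nu_i\bigr)$, and for $j=2,\dots,p$ the same with $\vecx_1$ replaced by $\vecx_j$ and the outer sum starting at $i=j$. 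Using these masses to integrate out $\vecx_0,\vecx_2,\dots,\vecx_p$, renaming $\vecx:=\vecx_1$, and using $\nu_1+\cdots+\nu_p=t$ to simplify, the first argument of $a$ becomes $\vecx-t\vecy+\sum_{i=1}^p\sum_{k=1}^{i-1}\sum_{\ell=1}^{\kappa_k}(\gamma_{k\ell}-\gamma_{(k+1)0})\vecz_{k\ell}\,\nu_i$ and the argument of the factor $[\lambda(\,\cdot\,)]^{\kappa_{j-1}}$ coming from $\vecx_j$ becomes $\vecx-\sum_{i=1}^{j-1}(\vecy-\sum_{k=1}^{i-1}\sum_{\ell=1}^{\kappa_k}(\gamma_{k\ell}-\gamma_{(k+1)0})\vecz_{k\ell})\,\nu_i$. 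Unfolding the definition \eqref{F0def} of $F_{\gamma,\scrS}$ (whose $\widehat W$-product carries over unchanged) and comparing with the displayed definition of $f_{\vecgamma,\scrS}$ in the statement identifies the surviving $\vecx,\vecy,\vecZ_\scrS,\vecnu$ integrand with $f_{\vecgamma,\scrS}(t,\vecx,\vecy)\,b(\vecx,\vecy)$, which is the claim; the $\vecZ_\scrS$-integral defining $f_{\vecgamma,\scrS}$ is of course read with the same regularization of $\int_{\RR_+}\e(\xi_{ij}^0 u)\,\d u$ produced above.
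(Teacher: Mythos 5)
Your proposal reaches the right conclusion and shares the paper's overall skeleton (start from Proposition \ref{prop:convergence3}, justify the interchange of $\lim_{\alpha\to0}$ with the integrals, then do the $\vecH$-integration by Fourier inversion to produce $b(\vecx,\vecy)$ and the $[\lambda(\cdot)]^{\kappa_{i-1}}$ factors --- your bookkeeping of the delta masses and of the arguments of $a$ and $\lambda$ is correct). However, the key analytic step is handled by a genuinely different mechanism. The paper never integrates out $u$ first: it passes to the telescoped variables $\vecy_{ij}$ of Lemma \ref{prop2lem1}, so that each phase becomes $\tfrac12(\|\vecy_{i\mu_{ij}}\|^2-\|\vecy_{i\tau_{ij}}\|^2)u$, and then applies stationary phase in the $\vecy$-variables for fixed $\vecu$, obtaining the dominating function $\prod_{ij}\langle u_{ij}\rangle^{-d/2}$ uniformly in $\alpha\ge0$; integrability for $d\ge3$ then licenses dominated convergence in $\vecu$. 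You instead integrate over $u$ first, obtain the singular kernel $(\alpha^d-2\pi\i\xi)^{-1}$, and remove the singularity by a Castella-style integration by parts in coordinates adapted to the quadratic phases, bounding $\log(\alpha^d-2\pi\i\xi)$ uniformly in $\alpha$. Both routes are legitimate and both ultimately consume the hypothesis $d\ge3$: the paper through $\int\langle u\rangle^{-d/2}\d u<\infty$, yours through the regularity of the pushforward density near the critical values of the phases. The paper's version is somewhat more economical because it avoids the co-area change of variables altogether.

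Two points in your argument deserve attention. First, the step you leave implicit --- that the pushforward $g(\xi)$ of the Schwartz data under the nondegenerate quadratic $\vecz_{ij}\mapsto\xi_{ij}^0$ is differentiable with $g'$ integrable against $\log|\xi|$ --- is not automatic: near the critical value one has $g(\xi)\sim|\xi-\xi_c|^{d/2-1}$, so $g'(\xi)\sim|\xi-\xi_c|^{d/2-2}$, which is integrable (against a logarithm) only for $d\ge3$, and for $d=3$ is genuinely singular; moreover you need control of the \emph{mixed} partials $\prod_{ij}\partial_{\xi_{ij}^0}$ through the triangular change of variables, with combinatorial constants that do not blow up in $m$ (the same issue the paper confronts in Lemma \ref{prop2lem2}). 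This should be stated and checked rather than delegated entirely to ``as in Castella''. Second, a small slip: $\widehat\lambda_{\kappa_i}$ is the Fourier transform of the compactly supported $\lambda^{\kappa_i}$ and is therefore rapidly decaying but certainly not compactly supported; rapid decay is what your dominating function actually requires, and it suffices.
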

\begin{proof} We begin from the statement of Proposition \ref{prop:convergence3}, and claim that the $\vecu$ integral converges uniformly for $\alpha \geq 0$. Using the same substitutions as in the proof of Proposition \ref{prop:convergence1} we can write
\begin{equation}
\begin{split}
\lim_{r\to 0} \scrA_{\d,\vecgamma,\scrS}^{\alpha,r}(t) &= \int_{\RR^{(m+2)d}}\d\vecy \d \vecH \d\vecZ_\scrS  \, \widehat \lambda_{\kappa_1} (\veceta_2) \cdots \widehat \lambda_{\kappa_{p-1}}(\veceta_p) \, G_{\gamma,\scrS}(\vecY_\scrS,\vecH)   \\
&\times \Bigg[  \prod_{i=1}^{p-1} \prod_{j=1}^{\kappa_i} \int_{\RR_+} \e( \tfrac12 (\|\vecy_{i\mu_{ij}}\|^2 - \|\vecy_{i\tau_{ij} } \|^2 ) u ) \e^{- \alpha^d u} \, \d u \Bigg] \int_{\RR_+^{p}}   \d \vecnu \, \delta(\nu_1+\cdots+\nu_{p} - t) \\
&\times \left[\prod_{i=1}^{p}  \e( \vecy_{(i-1) \sigma_{i-1}} \cdot (\sum_{k=1}^i \veceta_k) \, \nu_{i}) \, \bm 1 [ \Gamma(\alpha \sum_{k=1}^i \veceta_k)=1 ]\right]
\end{split}
\end{equation}
where
\begin{equation}\begin{split} 
G_{\vecgamma,\scrS}(\vecY_\scrS,\vecH,\vecy) &= \left[\prod_{i=1}^{p-1} \prod_{j=1}^{\kappa_i+1} \widehat W(\vecy_{i(j-1)} - \vecy_{ij})  \right] \, \tilde a(- \sum_{i=1}^{p} \veceta_i,  \vecy_{(p-1) \sigma_{p-1}} ) \, \tilde b (\veceta_1,\vecy)
\end{split}\end{equation}
and we have the convention $\vecy_{i0} = \vecy_{i (\kappa_i+1)} = \vecy_{(i-1) \sigma_{i-1}}$. Considering only the $\vecy_{ij}$ integration this has the form
\begin{equation} \label{toymodel}
\int_{\RR^{(m+1-p)d}} g(\vecY_\scrS) \, \Bigg[  \prod_{i=1}^{p-1} \prod_{j=1}^{\kappa_i} \int_{\RR_+} \e( \tfrac12 (\|\vecy_{i\mu_{ij}}\|^2 - \|\vecy_{i\tau_{ij} } \|^2 ) u ) \e^{- \alpha^d u} \, \d u \Bigg] \d \vecY_{\scrS}
\end{equation}
where $g$ is Schwartz class uniformly in $\alpha$. Consider just the first block $\{\gamma_1,\dots,\gamma_{\kappa_1} \}$, and suppose that it consists of sub-blocks of $\ell_1$ ones, followed by $\ell_2$ zeroes, followed by $\ell_3$ ones, and so on. If there are $2k$ of these sub-blocks in total then
\begin{multline} \label{toymodelexpandedproduct}
\prod_{j=1}^{\kappa_1} \int_{\RR_+} \e( \tfrac12 (\|\vecy_{i\mu_{ij}}\|^2 - \|\vecy_{i\tau_{ij} } \|^2 ) u ) \e^{- \alpha^d u} \, \d u  \\
= \int_{\RR_+^{\kappa_1} } \d u_1 \cdots \d u_{\kappa_1} \left( \prod_{i=1}^{\ell_1}\e( \tfrac12 (\|\vecy\|^2 - \|\vecy_{i} \|^2 ) u_i ) \e^{- \alpha^d u_i} \right) \left( \prod_{i= \kappa_1+1-\ell_2}^{\kappa_1} \e( \tfrac12 (\|\vecy_{i}\|^2 - \|\vecy_{\ell_1} \|^2 ) u_i ) \e^{- \alpha^d u_i} \right)   \\
\times \cdots  \times\left( \prod_{i=\ell_1+\cdots+\ell_{2k-3}+1}^{\ell_1+\cdots+\ell_{2k-1}} \e( \tfrac12 (\|\vecy_{\kappa_1+1-\ell_2-\cdots-\ell_{2k-2}} \|^2 - \|\vecy_{i} \|^2 ) u_i ) \e^{- \alpha^d u_i} \right)  \\
\times \left( \prod_{i=\kappa_1+1-\ell_2-\cdots-\ell_{2k}}^{\kappa_1-\ell_2-\cdots-\ell_{2k-2}} \e( \tfrac12 (\|\vecy_{i}\|^2 - \|\vecy_{\ell_1+\cdots+\ell_{2k-1}} \|^2 ) u_i) \e^{- \alpha^d u_i} \right).
\end{multline}
Let $g_0 \in \scrS(\RR^d)$ be Schwartz class, then by stationary phase one obtains
\begin{equation}
\int_{\RR^d} g_0(\vecy) \e( \tfrac12 \|\vecy\|^2 \, s) \d \vecy \ll \langle s\rangle^{-d/2}
\end{equation}
where $ A \ll B$ means there exists a constant $c$ such that $A < c B$. Most of the $\vecy_i$ appear in only one factor, as $\e(\pm \tfrac12 \|\vecy_i\|^2 u_i)$, which after integrating over $\vecy_i$ against the Schwartz function $g$ gives a factor $\langle u_i \rangle^{-d/2}$. If $i = \ell_1+\cdots+ \ell_{2j-1}$ or $\kappa_1-\ell_2-\cdots-\ell_{2j}$ for some $j$ then it appears with a more complicated coefficient. For example, $\vecy_{\ell_1}$ appears as the exponential factor
\begin{equation}
\e( -\tfrac12 \|\vecy_{\ell_1} \|^2 ( u_{\ell_1} + u_{\kappa_1+1-\ell_2} +\cdots+ u_{\kappa_i} ) ).
\end{equation}
After integrating over $\vecy_{\ell_1}$ this yields a factor of $\langle u_{\ell_1}+ u_{\kappa_1+1-\ell_2} +\cdots+ u_{\kappa_i} \rangle^{-d/2}$, but since all the $u_i$ are nonnegative, this can be bounded above by $\langle u_{\ell_1} \rangle^{-d/2}$. In other words, we have that
\begin{multline} \label{toymodel2}
\int_{\RR^{(m+1-p)d}} g(\vecY_\scrS) \, \Bigg[  \prod_{i=1}^{p-1} \prod_{j=1}^{\kappa_i} \int_{\RR_+} \e( \tfrac12 (\|\vecy_{i\mu_{ij}}\|^2 - \|\vecy_{i\tau_{ij} } \|^2 ) u ) \e^{- \alpha^d u} \, \d u \Bigg] \d \vecY_{\scrS} \\
\ll \int_{\RR_+^{ m+1-p} } \prod_{i=1}^{p-1} \prod_{j=1}^{\kappa_i}   \langle u_{ij} \rangle^{-d/2} \d u_{ij}
\end{multline}
uniformly for all $\alpha \geq 0$. These integrals converge for all $d\geq 3$, and the result then follows by integrating over $\vecH$ and setting $\alpha = 0$ in \eqref{simple5}.
\end{proof}

\begin{thm}[Convergence of the full series] \label{thm:convergenceofseries}
There exists a constant $\lambda_0>0$ depending on $\alpha, t, W, a$ and $b$ such that for all $\lambda$ with $\|\lambda\|_{1,\infty}<\lambda_0$
\begin{equation}
\lim_{\alpha \to 0}\lim_{r\to 0} \Tr(\rho_{r^{1-d}t} \, \Op_{r,h}(b) ) = \int_{\RR^{2d}} \d \vecx \d\vecy \, f(t,\vecx,\vecy) \, b(\vecx,\vecy)
\end{equation}
where
\begin{equation} \label{weaklimit}
f(t,\vecx,\vecy) = a(\vecx-t\vecy,\vecy) + \sum_{m=1}^\infty (2 \pi \i)^m \sum_{\vecgamma \in \{0,1\}^m} (-1)^{\gamma_1+\cdots+\gamma_m} \sum_{\scrS \in \Pi_m}f_{\vecgamma,\scrS}(t,\vecx,\vecy).
\end{equation}
\end{thm}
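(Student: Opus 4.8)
The plan is to assemble Propositions~\ref{prop:mequalszero}, \ref{prop:convergence3}, \ref{prop:convergence5} and Theorems~\ref{prop:convergence2}, \ref{prop:convergence4}, and to pass to the two limits term by term. Fix $\alpha>0$ and $r=h>0$. Starting from the formal expansion of Section~2 and the reorganisation \eqref{scrAseries} together with the splitting $\scrA_{\vecgamma,\scrS}^{\alpha,r}=\scrA_{\d,\vecgamma,\scrS}^{\alpha,r}+\scrA_{\nd,\vecgamma,\scrS}^{\alpha,r}$, I would write
\[
\Tr\bigl(\rho_{r^{1-d}t}\,\Op_{r,h}(b)\bigr)=\scrA_0^{\alpha,r}(t)+\sum_{m=1}^\infty(2\pi\i)^m\sum_{\vecgamma\in\{0,1\}^m}(-1)^{\gamma_1+\cdots+\gamma_m}\sum_{\scrS\in\Pi_m}\bigl(\scrA_{\d,\vecgamma,\scrS}^{\alpha,r}(t)+\scrA_{\nd,\vecgamma,\scrS}^{\alpha,r}(t)\bigr),
\]
and then take the limits in the prescribed order, $r\to0$ first and $\alpha\to0$ afterwards.

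For the inner limit, the total nondiagonal contribution equals $O\bigl(r\log_2(1+r^{1-d}b_\scrP^{-1})\bigr)$ by Theorem~\ref{prop:convergence2}, which tends to $0$ as $r\to0$. For the diagonal part one has a majorant uniform in $r$: this is the content of Theorem~\ref{prop:convergence4}, and the underlying mechanism is the stationary-phase estimate of \eqref{toymodel2}, which controls the $\vecu$-integrals by the decay $\langle u_{ij}\rangle^{-d/2}$ (integrable because $d\geq3$) rather than by the damping factor $\e^{-\alpha^d u_{ij}}$, and therefore applies uniformly in $r$ and $\alpha$ at once. A Weierstrass $M$-test then lets me interchange $\lim_{r\to0}$ with $\sum_m$ and with the finite sums over $\vecgamma$ and $\scrS$, and Proposition~\ref{prop:convergence3} identifies each limit; together with $\lim_{r\to0}\scrA_0^{\alpha,r}(t)$ computed inside the proof of Proposition~\ref{prop:mequalszero} this yields
\[
\lim_{r\to0}\Tr\bigl(\rho_{r^{1-d}t}\,\Op_{r,h}(b)\bigr)=\lim_{r\to0}\scrA_0^{\alpha,r}(t)+\sum_{m=1}^\infty(2\pi\i)^m\sum_{\vecgamma}(-1)^{\gamma_1+\cdots+\gamma_m}\sum_{\scrS}\lim_{r\to0}\scrA_{\d,\vecgamma,\scrS}^{\alpha,r}(t).
\]

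I would then let $\alpha\to0$. The $\scrA_0^{\alpha,r}$ term converges to $\int_{\RR^{2d}}a(\vecx-t\vecy,\vecy)\,b(\vecx,\vecy)\,\d\vecx\,\d\vecy$ by the second half of the proof of Proposition~\ref{prop:mequalszero}. The one genuinely delicate point is the exchange of $\lim_{\alpha\to0}$ with $\sum_m$: the straightforward bounds coming from Proposition~\ref{prop:convergence1} and from the proof of Theorem~\ref{prop:convergence4} carry factors $\alpha^{-d(m-1+p)}$ and are useless here. The remedy is precisely the $\alpha$-uniform estimate used in proving Proposition~\ref{prop:convergence5}, namely \eqref{toymodel2} combined with the Schwartz decay of $\tilde a$, $\tilde b$ and of each $\widehat\lambda_{\kappa_i}$ (the latter replacing any appeal to the compact support of $\chi$ in the $\vecH$-integral) and with the fixed volume $t^{p-1}/(p-1)!$ of the simplex $\{\nu_1+\cdots+\nu_p=t\}$; these combine to a constant $C$ \emph{independent of $\alpha\geq0$} with $\bigl|\lim_{r\to0}\scrA_{\d,\vecgamma,\scrS}^{\alpha,r}(t)\bigr|\leq(C\|\lambda\|_{1,\infty})^m$. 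Since $|\{0,1\}^m|=2^m$ and $|\Pi_m|=2^{m-1}$, the geometric series $\tfrac12\sum_{m\geq0}(8\pi C\|\lambda\|_{1,\infty})^m$ dominates everything uniformly in $\alpha$ and converges once $\|\lambda\|_{1,\infty}<(8\pi C)^{-1}$, so dominated convergence applies, and Proposition~\ref{prop:convergence5} supplies the term-wise limit $\int_{\RR^{2d}}f_{\vecgamma,\scrS}(t,\vecx,\vecy)\,b(\vecx,\vecy)\,\d\vecx\,\d\vecy$.

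Finally, taking $\lambda_0$ to be the minimum of the thresholds appearing in Theorems~\ref{prop:convergence2} and~\ref{prop:convergence4} and the one just described (each chosen independent of $\alpha$), for $\|\lambda\|_{1,\infty}<\lambda_0$ every interchange above is legitimate, whence
\[
\lim_{\alpha\to0}\lim_{r\to0}\Tr\bigl(\rho_{r^{1-d}t}\,\Op_{r,h}(b)\bigr)=\int_{\RR^{2d}}a(\vecx-t\vecy,\vecy)\,b(\vecx,\vecy)\,\d\vecx\,\d\vecy+\sum_{m=1}^\infty(2\pi\i)^m\sum_{\vecgamma}(-1)^{\gamma_1+\cdots+\gamma_m}\sum_{\scrS}\int_{\RR^{2d}}f_{\vecgamma,\scrS}(t,\vecx,\vecy)\,b(\vecx,\vecy)\,\d\vecx\,\d\vecy.
\]
The same bound evaluated at $\alpha=0$ shows that $\sum_{m,\vecgamma,\scrS}|f_{\vecgamma,\scrS}(t,\vecx,\vecy)|$ is an integrable majorant in $(\vecx,\vecy)$, so the $(\vecx,\vecy)$-integration and the sum may be swapped; recognising the bracket as $f(t,\vecx,\vecy)-a(\vecx-t\vecy,\vecy)$ through \eqref{weaklimit} gives $\lim_{\alpha\to0}\lim_{r\to0}\Tr(\rho_{r^{1-d}t}\Op_{r,h}(b))=\int_{\RR^{2d}}f(t,\vecx,\vecy)\,b(\vecx,\vecy)\,\d\vecx\,\d\vecy$, as claimed. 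The principal obstacle is exactly the $\alpha\to0$ versus $\sum_m$ interchange, handled by the $\alpha$-uniform stationary-phase bound in place of the damping-based estimates.
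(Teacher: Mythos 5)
Your proposal is correct and follows essentially the same route as the paper: decompose into the $m=0$ term plus diagonal and nondiagonal parts, kill the nondiagonal sum via Theorem~\ref{prop:convergence2}, control the diagonal sum uniformly in $r$ via Theorem~\ref{prop:convergence4}, and handle the $\alpha\to 0$ interchange with the $\alpha$-uniform bound $|\lim_{r\to0}\scrA_{\d,\vecgamma,\scrS}^{\alpha,r}(t)|\leq C^m$ extracted from the proof of Proposition~\ref{prop:convergence5}. You correctly identify the key subtlety (that the $\alpha^{-d(m-1+p)}$ bounds are useless for the outer limit and must be replaced by the stationary-phase estimate), which is exactly the mechanism the paper relies on.
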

\begin{proof}
We begin from the definition
\begin{equation}
\Tr (\rho_{r^{1-d} t} \Op_{r,h}(b) ) = \sum_{m=0}^\infty (2 \pi \i )^m \, \scrA_m^{\alpha,r}(t).
\end{equation}
The $m=0$ term converges by Proposition \ref{prop:mequalszero}. Separating the remaining terms into diagonal and nondiagonal parts gives
\begin{multline}
\sum_{m=1}^\infty (2 \pi \i )^m \, \scrA_m^{\alpha,r}(t)= \sum_{m=1}^\infty (2 \pi \i )^m \,  \sum_{\vecgamma \in\{0,1\}^m} (-1)^{\gamma_1+\cdots+\gamma_m}  \sum_{\scrS\in \Pi_m}\scrA_{\d,\vecgamma,\scrS}^{\alpha,r}(t) \\
+\sum_{m=1}^\infty (2 \pi \i)^m \,  \sum_{\vecgamma \in\{0,1\}^m} (-1)^{\gamma_1+\cdots+\gamma_m}  \sum_{\scrS\in \Pi_m}\scrA_{\nd,\vecgamma,\scrS}^{\alpha,r}(t)
\end{multline}
Applying Theorems \ref{prop:convergence2} and \ref{prop:convergence4} tells us that for $\lambda$ small enough, the first term on the right hand side converges, and that the second vanishes in the limit $r \to 0$. Following the proof of Proposition \ref{prop:convergence5} there exists a constant $C>0$ such that
\begin{equation}
\left| \lim_{r\to 0} \scrA_{\d,\vecgamma,\scrS}^{\alpha,r}(t) \right| < C^m
\end{equation}
uniformly for all $\alpha \geq 0$. The series thus converges uniformly for $\|\lambda\|_{1,\infty}< (8 \pi C)^{-1}$ and the result follows from Proposition \ref{prop:convergence5}.
\end{proof}

\section{Extracting the  Linear Boltzmann Equation}
We are now ready to prove Theorem \ref{theorem}, namely that the weak limit, $f(t,\vecx,\vecy)$, in Theorem \ref{thm:convergenceofseries} coincides with a solution of the linear Boltzmann equation. We first show that it satisfies an auxiliary transport equation.
\begin{prop}
The expression $f(t,\vecx,\vecy)$ in \eqref{weaklimit} satisfies
\begin{equation}\label{transport1}
\begin{split}
(\partial_t &+\vecy\cdot\nabla_\vecx) f(t,\vecx,\vecy)\\
&= 2 \Re \bigg\{ \sum_{n=2}^\infty (-2 \pi \i \lambda (\vecx) )^{n} \sum_{\vecgamma \in \{0,1\}^{n-1}} (-1)^{\gamma_1+\cdots+\gamma_{n-1}} \\
&\times \int_{\RR^{d(n-1)} } \d \vecz_1 \cdots \d \vecz_{n-1}\, \widehat W(-\vecz_1) \cdots \widehat W(-\vecz_{n-1}) \widehat W(\vecz_1+\cdots + \vecz_{n-1})  \\
&\times\bigg[\prod_{i=1}^{n-1} \int_{\RR_+}\e( \tfrac12( \|\vecy- \sum_{j=1}^i \gamma_j \vecz_j \|^2 - \| \vecy  + \sum_{j=1}^i \bar\gamma_j \vecz_j \|^2)\,u) \d u \bigg]  f(t,\vecx,\vecy- \sum_{i=1}^{n-1} \gamma_i \vecz_i) \,  \bigg\}
\end{split}
\end{equation}
\end{prop}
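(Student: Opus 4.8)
The plan is to apply the transport operator $\partial_t+\vecy\cdot\nabla_\vecx$ to the series \eqref{weaklimit} term by term, to reduce each term to a single boundary contribution, and to reassemble these into the right-hand side of \eqref{transport1}. Term-by-term differentiation is legitimate: applied to one $f_{\vecgamma,\scrS}$ in the form of Proposition \ref{prop:convergence5}, the operator touches only the factors $\lambda(\cdots)$, the function $a$, and the constraint $\delta(\nu_1+\cdots+\nu_p-t)$ — the $\widehat W$-products and the oscillatory integrals $\int_{\RR_+}\e(\xi_{ij}^0 u)\,\d u$ are left untouched, so the stationary-phase bounds behind Proposition \ref{prop:convergence5} still control everything and the product rule inserts only $O(m)$ extra bounded factors; hence $(\partial_t+\vecy\cdot\nabla_\vecx)f_{\vecgamma,\scrS}$ again satisfies an estimate $C^m$, the differentiated series converges absolutely and uniformly for $\|\lambda\|_{1,\infty}$ small, and all the rearrangements below are justified.

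To differentiate a single term, first use $\delta(\nu_1+\cdots+\nu_p-t)$ to write the first argument of $a$ as $\vecx-\sum_{i=1}^p\vecw_i\nu_i$, with $\vecw_1=\vecy$ and $\vecw_{j+1}=\vecw_j-\sum_{\ell=1}^{\kappa_j}(\gamma_{j\ell}-\gamma_{(j+1)0})\vecz_{j\ell}$, so that the only $t$-dependence sits in the $\delta$. Writing the integrand as $\delta(\nu_1+\cdots+\nu_p-t)\,\Phi$, the crucial observation is that $\vecx$ and the first time variable $\nu_1$ enter every $\vecx$-dependent factor of $\Phi$ only through the combination $\vecx-\nu_1\vecy$, so $\partial_{\nu_1}\Phi=-\vecy\cdot\nabla_\vecx\Phi$. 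Since $\partial_t\delta(\nu_1+\cdots+\nu_p-t)=-\partial_{\nu_1}\delta(\nu_1+\cdots+\nu_p-t)$, an integration by parts in $\nu_1\in[0,\infty)$ produces a bulk term equal to $-\vecy\cdot\nabla_\vecx f_{\vecgamma,\scrS}$ — which cancels the transport term on the left — and a boundary term at $\nu_1=0$, leaving
\[
(\partial_t+\vecy\cdot\nabla_\vecx)f_{\vecgamma,\scrS}(t,\vecx,\vecy)=\int_{\RR^{(m+1-p)d}}\d\vecZ_\scrS\int_{\RR_+^{p-1}}\d\nu_2\cdots\d\nu_p\;\delta(\nu_2+\cdots+\nu_p-t)\;\Phi\big|_{\nu_1=0};
\]
for $m=0$ (the free term $a(\vecx-t\vecy,\vecy)$) this is $0$.

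It remains to identify and resum the boundary term. At $\nu_1=0$ the first collision cluster sits at $\vecx$: it contributes a factor $\lambda(\vecx)^{n}$ ($n$ the number of scatterers in the cluster), the product $\widehat W(-\sum_j\vecz_{1j})\prod_j\widehat W(\vecz_{1j})$, the oscillatory integrals $\prod_j\int_{\RR_+}\e(\xi_{1j}^0 u)\,\d u$, and the bits $(\gamma_{11},\dots,\gamma_{1\kappa_1})$ together with the marker bit $\gamma_{20}$; this vertex couples to the rest of $\Phi|_{\nu_1=0}$ only through the post-collision momentum $\vecw_2=\vecy-\sum_\ell(\gamma_{1\ell}-\gamma_{20})\vecz_{1\ell}$, while everything else — blocks $2,\dots,p-1$, the integrations over $\nu_2,\dots,\nu_p$, their $\widehat W$-factors and $u$-integrals, and the $a$-factor — is exactly the integrand of $f_{\vecgamma',\scrS'}(t,\vecx,\vecw_2)$ for the configuration obtained by deleting block $1$ and its marker. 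Summing over all $(m,\vecgamma,\scrS)$, the reduced data $(\vecgamma',\scrS')$ run over every configuration and rebuild the series for $f$, so the remainder becomes $f(t,\vecx,\vecw_2)$; the cluster size $n$ and the bits $(\gamma_{11},\dots,\gamma_{1\kappa_1})$ become the outer variables $n,(\gamma_1,\dots,\gamma_{n-1})$ of \eqref{transport1}; the substitution $\vecz_j\mapsto-\vecz_j$ turns the block-$1$ factors into $\widehat W(-\vecz_1)\cdots\widehat W(-\vecz_{n-1})\widehat W(\vecz_1+\cdots+\vecz_{n-1})$ and the displayed oscillatory integrals; and since $\gamma_{20}$ enters only through $\vecw_2$, carrying out $\sum_{\gamma_{20}\in\{0,1\}}(-1)^{\gamma_{20}}$ — together with a relabelling $\gamma_{1\ell}\mapsto\bar\gamma_{1\ell}$ and the conjugation identities $\overline{\widehat W(\vecz)}=\widehat W(-\vecz)$ (the potential $W$ is real) and $\overline{(2\pi\i)^n}=(-2\pi\i)^n$ — collapses the two terms into $2\Re\{\cdot\}$; in particular the $n=1$ cluster drops out because then $\gamma_{20}$ is unconstrained and $\sum_{\gamma_{20}}(-1)^{\gamma_{20}}=0$, which is why the sum in \eqref{transport1} begins at $n=2$. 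Collecting everything gives exactly \eqref{transport1}.

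The analytic ingredients — the convergence bounds, the identity $\partial_{\nu_1}\Phi=-\vecy\cdot\nabla_\vecx\Phi$, the integration by parts — are routine. The real work, and the main obstacle, is the combinatorial and sign bookkeeping of the final step: matching the reindexed partition $(\vecgamma',\scrS')$ and the weights $(2\pi\i)^m$, $(-1)^{\gamma_1+\cdots+\gamma_m}$ with those of \eqref{transport1}, and checking that the signs produced by $\vecz_j\mapsto-\vecz_j$, by $\gamma_{1\ell}\mapsto\bar\gamma_{1\ell}$ and by the conjugations conspire to yield the prefactor $2$ and a real part (rather than an imaginary part) — all while the $u$-integrals converge only conditionally (distributionally), so that every limit and interchange must be handled with the same care as in Propositions \ref{prop:convergence3} and \ref{prop:convergence5}.
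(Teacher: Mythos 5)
Your proposal is correct and follows essentially the same route as the paper: the paper first resums the series into the Duhamel-type recursion \eqref{weaklimit2} by splitting each $\scrS$ into its first block plus the rest, and then applies the transport operator, whereas you differentiate term by term and resum the resulting boundary contributions at $\nu_1=0$ — the same manipulation in the opposite order. The key steps (the boundary term as an instantaneous collision cluster at $\vecx$, the sum over the marker bit $\gamma_{20}$ killing $n=1$, and the $\gamma\mapsto\bar\gamma$, $\vecz\mapsto-\vecz$ symmetrisation producing $2\Re$) coincide exactly with the paper's argument.
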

\begin{proof}
Note that every $\scrS \in \Pi_m$ can be `decomposed' into two pieces: if $\scrS = \{0, n, \dots, m\}$ we decompose it into the pieces $\{0,n\}$ and $\{n, \dots , m\}$. Through this decomposition the function $f(t,\vecx,\vecy)$ can be written recursively as
\begin{equation}\label{weaklimit2}
\begin{split}
f(t,\vecx,\vecy) &= a(\vecx-t\vecy,\vecy) \\
&+  \sum_{n=1}^\infty  (2 \pi \i )^n \sum_{\vecgamma \in \{0,1\}^n} (-1)^{\gamma_1+\cdots+\gamma_n}  \int_0^t \d \nu \, \left[\lambda\bigg(\vecx - (t-\nu)\vecy  \bigg)\right]^n\\
&\times \int_{\RR^{(n-1)d}} \d \vecz_1\cdots \d \vecz_{n-1}  \widehat W(\vecz_1) \cdots \widehat W(\vecz_{n-1}) \widehat W(-\vecz_1-\cdots-\vecz_{n-1} ) \\
&\times \bigg[\prod_{i=1}^{n-1} \int_{\RR_+}\e( \tfrac12( \|\vecy+ \sum_{j=1}^i \bar\gamma_j \vecz_j \|^2 - \| \vecy  - \sum_{j=1}^i \gamma_j \vecz_j \|^2)\,u) \d u \bigg]\\
&\times f(\nu, \vecx- (t-\nu )\vecy, \vecy - (\gamma_1-\gamma_n)\vecz_1-\cdots-(\gamma_{n-1}-\gamma_n)\vecz_{n-1} ) .
\end{split}
\end{equation} 
The $n=1$ term vanishes -- $\gamma_1 = 1$ and $\gamma_1=0$ yield the same expression with opposite signs. Applying the operator $(\partial_t+\vecy\cdot\nabla_\vecx)$ to both sides yields
\begin{equation}\label{weaklimit3}
\begin{split}
(\partial_t &+ \vecy \cdot \nabla_\vecx)f(t,\vecx,\vecy) \\
&=\sum_{n=2}^\infty  (2 \pi \i \lambda(\vecx))^n \sum_{\vecgamma \in \{0,1\}^n} (-1)^{\gamma_1+\cdots+\gamma_n}  \\
&\times \int_{\RR^{(n-1)d}} \d \vecz_1\cdots \d \vecz_{n-1}  \widehat W(\vecz_1) \cdots \widehat W(\vecz_{n-1}) \widehat W(-\vecz_1-\cdots-\vecz_{n-1} ) \\
&\times\bigg[\prod_{i=1}^{n-1} \int_{\RR_+}\e( \tfrac12( \|\vecy+ \sum_{j=1}^i \bar\gamma_j \vecz_j \|^2 - \| \vecy  - \sum_{j=1}^i \gamma_j \vecz_j \|^2)\,u) \d u \bigg]\\
&\times f(t, \vecx, \vecy - (\gamma_1-\gamma_n)\vecz_1-\cdots-(\gamma_{n-1}-\gamma_n)\vecz_{n-1} )
 \end{split}
\end{equation} 
By summing over $\gamma_n$ we obtain
\begin{equation}
\begin{split}
(\partial_t &+ \vecy \cdot \nabla_\vecx)f(t,\vecx,\vecy)\\
&=\sum_{n=2}^\infty  (2 \pi \i \lambda(\vecx))^n \sum_{\vecgamma \in \{0,1\}^{n-1}} (-1)^{\gamma_1+\cdots+\gamma_{n-1}}  \\
&\times \int_{\RR^{(n-1)d}} \d \vecz_1\cdots \d \vecz_{n-1}  \widehat W(\vecz_1) \cdots \widehat W(\vecz_{n-1}) \widehat W(-\vecz_1-\cdots-\vecz_{n-1} ) \\
&\times\bigg[\prod_{i=1}^{n-1} \int_{\RR_+}\e( \tfrac12( \|\vecy+ \sum_{j=1}^i \bar\gamma_j \vecz_j \|^2 - \| \vecy  - \sum_{j=1}^i \gamma_j \vecz_j \|^2)\,u) \d u \bigg]\\
&\times \left( f(t, \vecx, \vecy - \sum_{i=1}^{n-1}\gamma_i\vecz_i) - f(t, \vecx, \vecy + \sum_{i=1}^{n-1}\bar\gamma_i\vecz_i ) \right).
\end{split}
\end{equation}
For the second term, we replace $\gamma_i$ by $\bar\gamma_i$ and make the variable substitutions $\vecz_i \to - \vecz_i$. This allows us to combine the two terms and the result follows.

\end{proof}

\begin{proof}[Proof of Theorem \ref{theorem}]
Define the distribution
\begin{equation}
\Delta(\vecn,\vecp) := \int_0^{\infty} \exp\{ \i (\|\vecn\|^2-\|\vecp\|^2) \, s \} \, \d s,
\end{equation}
and put $\widehat V(\vecy) = -2 \widehat W(-\vecy)$. Then, \eqref{transport1} can be written
\begin{equation}\label{transport2}
\begin{split}
(\partial_t &+\vecy\cdot\nabla_\vecx) f(t,\vecx,\vecy)\\
&= 2 \pi \Re \bigg\{ \sum_{n=2}^\infty\lambda (\vecx)^{n} \sum_{\vecgamma \in \{0,1\}^{n-1}} (-1)^{\gamma_1+\cdots+\gamma_{n-1}} \\
&\times \int_{\RR^{d(n-1)} } \d \vecz_1 \cdots \d \vecz_{n-1}\, [ \i \widehat V(\vecz_1)] \cdots [\i \widehat V(\vecz_{n-1})] [\i \widehat V(-\vecz_1-\cdots - \vecz_{n-1})]  \\
&\times\bigg[\prod_{i=1}^{n-1}\Delta(\vecy - \sum_{j=1}^i \gamma_j \vecz_j,  \vecy  + \sum_{j=1}^i \bar\gamma_j \vecz_j ) \bigg]  f(t,\vecx,\vecy- \sum_{i=1}^{n-1} \gamma_i \vecz_i) \,  \bigg\}.
\end{split}
\end{equation} 
i.e.
$$(\partial_t +\vecy\cdot\nabla_\vecx) f(t,\vecx,\vecy) =\pi \sum_{\ell=1}^\infty \lambda(\vecx)^{\ell+1} \scrQ_{\ell}(g_\vecx)(t,\vecy)$$ with $\scrQ_\ell$ as in \cite[Eq (2.7)]{Castella_LD2} and $g_\vecx: (t,\vecy) \mapsto f(t,\vecx,\vecy)$. In view of \cite[Lemma 3 \& Theorem 2]{Castella_LD2} (Recall that $W$ and the initial data $a$ are both Schwartz, so certainly satisfy the weaker regularity assumptions made in \cite{Castella_LD2}) we obtain
\begin{equation}\label{transport3}
\begin{split}
(\partial_t &+\vecy\cdot\nabla_\vecx) f(t,\vecx,\vecy) = \pi \int_{\RR^d}  [ \Sigma^{\mathrm{ld}}(\vecy,\vecy') f(t,\vecx,\vecy')- \Sigma^{\mathrm{ld}}(\vecy',\vecy) f(t,\vecx,\vecy)] \, \d \vecy'
\end{split}
\end{equation}
where
\begin{equation}
\Sigma^{\mathrm{ld}}(\vecy,\vecy') = 2 \pi \delta(\|\vecy\|^2-\|\vecy'\|^2) |\scrT(\vecy',\vecy)|^2
\end{equation}
and (see \cite[(2.5)]{Castella_LD2})
\begin{equation}
\begin{split}
\scrT(\vecy',\vecy) &= \lambda(\vecx) \widehat V(\vecy'-\vecy) \\
&-\i \sum_{\ell=1}^\infty (\i \lambda(\vecx))^{\ell+1} \int_{\RR^{\ell d}} \widehat V(\vecy'-\veck_1) \cdots \widehat V(\veck_\ell-\vecy) \Delta(\vecy,\veck_1) \cdots \Delta(\vecy,\veck_\ell) \, \d \veck_1\cdots \d \veck_\ell \\
&= -2 \lambda(\vecx) \widehat W(\vecy-\vecy') \\
&- 2 \sum_{\ell=1}^{\infty} \lambda(\vecx)^{\ell+1} (-2  \pi \i)^{\ell} \int_{\RR^{\ell d}} \widehat W(\vecy-\veck_1) \cdots \widehat W(\veck_\ell-\vecy') \\ 
&\hspace{4cm}\times \left[\int_0^\infty \prod_{i=1}^{\ell} \e( \tfrac12 (\|\vecy\|^2-\|\veck_i\|^2) \, u) \, \d u \right] \, \d \veck_1 \cdots \d \veck_\ell.
\end{split}
\end{equation}
Hence, $\scrT(\vecy',\vecy) = -2 T(\vecy,\vecy')$ and
\begin{equation}
\pi \Sigma^{\mathrm{ld}}(\vecy,\vecy')  = 8 \pi^2 \delta(\|\vecy\|^2 - \|\vecy'\|^2) |T_{\lambda(x)}(\vecy,\vecy')|^2
\end{equation}
with $T_{\mu}$ as in \eqref{Texplicit}.

\end{proof}

\bigskip

\noindent Remark: The relation $\scrT(\vecy',\vecy) = - 2 T(\vecy,\vecy')$ is due to a number of minor differences between the present set-up and Castella's work \cite{Castella_LD,Castella_LD2}: (i) the Fourier transforms are normalised differently, (ii) the Schr\"odinger operator is normalised differently, (iii) the initial von Neumann equation \eqref{heisenberg} has $\vecy$ and $\vecy'$ interchanged.


\begin{thebibliography}{99}
%
%
%
%
%
\bibitem{Boldrighini83}
C. Boldrighini, L.A. Bunimovich and Y.G. Sinai, On the Boltzmann equation for the Lorentz gas.  { J. Stat. Phys.}  {32}  (1983), 477--501.

\bibitem{Caglioti10}
E. Caglioti and F. Golse, On the Boltzmann-Grad limit for the two dimensional periodic Lorentz gas. { J. Stat. Phys.} {141} (2010), 264--317. 

%
\bibitem{Castella_LD}
F.~Castella.
\newblock From the von Neumann equation to the quantum Boltzmann equation in a
  deterministic framework.
\newblock { J. Stat. Phys.}, 104:387--447, 2001.


\bibitem{Castella_LD2}
F.~Castella.
From the von Neumann equation to the quantum Boltzmann equation. II. Identifying the Born series. 
\newblock { J. Stat. Phys.}, 106:1197--1220, 2002.

%
%
%
%
\bibitem{Eng_Erdos}
D.~Eng and L.~Erd\"{o}s.
\newblock The linear Boltzmann equation as the low density limit of a random
  Schr\"odinger equation.
\newblock { Reviews in Mathematical Physics}, 17(06):669--743, 2005.

\bibitem{Erdos_Yau}
L. Erd\"os and H.-T. Yau, Linear Boltzmann equation as the weak coupling limit of the random Schr\"odinger equation, { Comm. Pure Appl. Math.} LIII (2000) 667--735.

\bibitem{Fischer}
W. Fischer, H. Leschke, and P. M\"uller. "On the averaged quantum dynamics by white‐noise hamiltonians with and without dissipation." Annalen der Physik 510.2 (1998): 59-100.

%
\bibitem{Gallavotti69}
G. Gallavotti,
Divergences and approach to equilibrium in the Lorentz and the
Wind-tree-models, { Physical Review} 185 (1969), 308--322.
%
%
%
%
%

\bibitem{GM_SecondOrder}
J. Griffin and J. Marklof. Quantum transport in a low-density periodic potential: homogenisation via homogeneous flows, Pure and Applied Analysis. 1 (2019), no. 4, 571--614.

\bibitem{GM_Heuristic}
J. Griffin and J. Marklof. "Quantum Transport in a Crystal with Short-Range Interactions: The Boltzmann–Grad Limit." Journal of Statistical Physics 184.2 (2021): 1-46.

\bibitem{Korsch2}
M. Hensel and H. J. Korsch. "Dissipative quantum dynamics: solution of the generalized von Neumann equation for the damped harmonic oscillator." Journal of Physics A: Mathematical and General 25.7 (1992): 2043.


%

\bibitem{Hislop}
P. D. Hislop, K. Kirkpatrick, S. Olla and J. Schenker. "Transport of a quantum particle in a time-dependent white-noise potential." Journal of Mathematical Physics 60.8 (2019): 083303.

\bibitem{Jayannavar_Kumar}
A. M. Jayannavar and N. Kumar. "Nondiffusive quantum transport in a dynamically disordered medium." Physical Review Letters 48.8 (1982): 553.

\bibitem{Korsch1}
H. J. Korsch and H. Steffen. "Dissipative quantum dynamics, entropy production and irreversible evolution towards equilibrium." Journal of Physics A: Mathematical and General 20.12 (1987): 3787.

%
%
%
%
%
\bibitem{partII}
J. Marklof and A. Str\"ombergsson, The Boltzmann-Grad limit of the periodic Lorentz gas, Annals of Math. {174} (2011) 225--298.
%
%
\bibitem{MS2019}
J. Marklof and A. Str\"ombergsson, Kinetic theory for the low-density Lorentz gas, arXiv 1910.04982 (106pp)

%

\bibitem{Matern}
B. Matern, Spatial Variation, Springer Science \& Business Media {36} (2013)

\bibitem{Madhukar_Post}
A. Madhukar and W. Post. "Exact solution for the diffusion of a particle in a medium with site diagonal and off-diagonal dynamic disorder." Physical Review Letters 39.22 (1977): 1424.



%
%
%
\bibitem{Spohn77}
H. Spohn, Derivation of the transport equation for electrons moving through random
impurities, J. Stat. Phys. 17 (1977) 385--412.

\bibitem{Spohn78}
H. Spohn, The Lorentz process converges to a random flight process,  Comm.
Math. Phys.  60  (1978), 277--290.
%

\bibitem{SSL}
M. Sargent III, M. O. Scully and W. E. Lamb Jr. "Laser Physics", Addison-Wesley (1974).
\end{thebibliography}
\end{document}